\newtheorem{theorem}{Theorem}
\newtheorem{lemma}{Lemma}
\theoremstyle{definition}
\newtheorem{definition}{Definition}
\newtheorem{remark}{Remark}
\newtheorem{assumption}{Assumption}
\newtheorem{problem}{Problem}
\newtheorem{property}{Property}
\begin{document}

\title{Decentralized Control of Uncertain Multi-Agent Systems with Connectivity Maintenance and Collision Avoidance}

\author{Alexandros Filotheou, Alexandros Nikou and Dimos V. Dimarogonas
\thanks{The authors are with the ACCESS Linnaeus Center, School of Electrical
	Engineering, KTH Royal Institute of Technology, SE-100 44, Stockholm,
	Sweden and with the KTH Center for Autonomous Systems. Email: {\tt\small \{alefil, anikou, dimos\}@kth.se}. This work was supported by the H2020 ERC Starting Grant BUCOPHSYS, the EU H2020 Co4Robots project, the Swedish Foundation for Strategic Research (SSF), the Swedish Research Council (VR) and the Knut och Alice Wallenberg Foundation (KAW).}}



\maketitle

\begin{abstract}
This paper addresses the problem of navigation control of a general class of uncertain nonlinear multi-agent systems in a bounded workspace of $\mathbb{R}^n$ with static obstacles. In particular, we propose a decentralized control protocol such that each agent reaches a predefined position at the workspace, while using only local information based on a limited sensing radius. The proposed scheme guarantees that the initially connected agents remain always connected. In addition, by introducing certain distance constraints, we guarantee inter-agent collision avoidance, as well as, collision avoidance with the obstacles and the boundary of the workspace. The proposed controllers employ a class of Decentralized Nonlinear Model Predictive Controllers (DNMPC) under the presence of disturbances and uncertainties. Finally, simulation results verify the validity of the proposed framework.
\end{abstract}

\begin{IEEEkeywords}
Multi-Agent Systems, Cooperative control, Decentralized Control, Robust Control, Nonlinear Model Predictive Control, Collision Avoidance, Connectivity Maintenance.
\end{IEEEkeywords}

%
\IEEEpeerreviewmaketitle

\section{Introduction}

During the last decades, \emph{decentralized control of multi-agent systems} has gained a significant amount of attention due to the great variety of its applications, including  multi-robot systems, transportation, multi-point surveillance and biological systems. An important topic of research is \emph{multi-agent navigation} in both the robotics and the control communities, due to the need for autonomous control of multiple robotic agents in the same workspace. Important applications of multi-agent navigation arise also in the fields of air-traffic management and in autonomous driving by guaranteeing collision avoidance with other vehicles and obstacles. Other applications are formation control, in which the agents are required to reach a predefined geometrical shape (see e.g., \cite{alex_chris_ppc_formation_ifac}) and high-level planning where it is required to provide decentralized controllers for navigating the agents between regions of interest of the workspace (see e.g., \cite{alex_acc_2016, alex_automatica_2017}).

The literature on the problem of navigation of multi-agent systems is rich. In \cite{dimos_2006_automatica_nf}, (\cite{makarem_decentralized_nf}), a decentralized control protocol of multiple non-point agents (point masses) with collision avoidance guarantees is considered. The problem is approached by designing navigation functions which have been initially introduced in \cite{koditschek1990robot}. However, this method requires preposterously large actuation forces and it may give rise to numerical instability due to computations of exponentials and derivatives. A decentralized potential field approach of navigation of multiple unicycles and aerial vehicles with collision avoidance has been considered in \cite{panagou_potential_fields_unicycle} and \cite{baras_decentralized_control}, respectively; Robustness analysis and saturation in control inputs are not addressed. In \cite{roozbehani2009hamilton}, the collision avoidance problem for multiple agents in intersections has been studied. An optimal control problem is solved, with only time and energy constraints. Authors in \cite{loizou_2017_navigation_transformation} proposed decentralized controllers for multi-agent navigation and collision avoidance with arbitrarily shaped obstacles in $2$D environments. However, connectivity maintenance properties are not taken into consideration in all the aforementioned works. 

Other approaches in multi-agent navigation propose solutions to distributed optimization problems. In \cite{Dunbar2006549}, a decentralized receding horizon protocol for formation control of linear multi-agent systems is proposed. The authors in \cite{aguiar_multiple_uavs_linearized_mpc} considered the path-following problems for multiple Unmanned Aerial Vehicles (UAVs)  in which a distributed optimization method is proposed through linearization of the dynamics of the UAVs. A DNMPC along with potential functions for collision avoidance has been studied in \cite{distributed_mpc_sastry}. A feedback linearization framework along with Model Predictive Controllers (MPC) for multiple unicycles in leader-follower networks for ensuring collision avoidance and formation is introduced in \cite{fukushima_2005_distributed_mpc_linearization}. The authors of \cite{kevicky1, kevicky2, 4459797} proposed a decentralized receding horizon approach for discrete time multi-agent cooperative control.  However, in the aforementioned works, plant-model mismatch or uncertainties and/or connectivity maintenance are not considered. In \cite{1383977} and \cite{1429425} a centralized and a decentralized linear MPC formulation and integer programming is proposed, respectively, for dealing with collision avoidance of multiple UAVs.

The main contribution of this paper is to propose a novel solution to a general navigation problem of uncertain multi-agent systems, with limited sensing capabilities in the presence of bounded disturbances and bounded control inputs. We propose a Decentralized Nonlinear Model Predictive Control (DNMPC) framework in which each agent solves its own optimal control problem, having only availability of information on the current and planned actions of all agents within its sensing range. In addition, the proposed control scheme, under relatively standard Nonlinear Model Predictive Control (NMPC) assumptions, guarantees connectivity maintenance between the agents that are initially connected, collision avoidance between the agents, collision avoidance between agents and static obstacles of the environment, from all feasible initial conditions. To the best of the authors' knowledge, this is the first time that this novel multi-agent navigation problem is addressed. This paper constitutes a generalization of a submitted work \cite{IJC_2017} in which the problem of decentralized control of multiple rigid bodies under Langrangian dynamics in $\mathbb{R}^3$ is addressed.

The remainder of this paper is structured as follows: In Section \ref{sec:notation_preliminaries} the notation and preliminaries are given. Section \ref{sec:problem_formulation} provides the system dynamics and the formal problem statement. Section \ref{sec:main_results} discusses the technical details of the solution and Section \ref{sec:simulation_results} is devoted to simulation examples. Finally, conclusions and future work are discussed in Section \ref{sec:conclusions}.

\section{Notation and Preliminaries} \label{sec:notation_preliminaries}

The set of positive integers is denoted by $\mathbb{N}$. The real $n$-coordinate
space, $n\in\mathbb{N}$, is denoted by $\mathbb{R}^n$;
$\mathbb{R}^n_{\geq 0}$ and $\mathbb{R}^n_{> 0}$ are the sets of real
$n$-vectors with all elements nonnegative and positive, respectively. The notation
$\|x\|$ is used for the Euclidean norm of a vector $x \in \mathbb{R}^n$ and $\|A\| = \max \{ \|A x \|: \|x\| = 1 \}$ for the induced norm of a matrix $A \in \mathbb{R}^{m \times n}$. Given a real symmetric matrix $A$, $\lambda_{\text{min}}(A)$
and $\lambda_{\text{max}}(A)$ denote the minimum and the maximum absolute value of eigenvalues of $A$, respectively.
Its minimum and maximum singular values are denoted by $\sigma_{\text{min}}(A)$ and $\sigma_{\text{max}}(A)$ respectively; $I_n \in \mathbb{R}^{n \times n}$ and $0_{m \times n} \in \mathbb{R}^{m \times n}$ are the identity matrix and the $m \times n$ matrix with all entries zeros,
respectively. The set-valued function $\mathcal{B}:\mathbb{R}^n\times\mathbb{R}_{> 0} \rightrightarrows \mathbb{R}^n$, given as $\mathcal{B}(x, r) = \{y \in \mathbb{R}^n: \|y-x\| \leq r\},$ represents the $n$-th dimensional ball with center $x \in \mathbb{R}^{n}$ and radius $r \in \mathbb{R}_{> 0}$. 

\begin{definition} \label{def:p_difference}
	Given the sets $\mathcal{S}_1$, $\mathcal{S}_2 \subseteq \mathbb{R}^n$, the \emph{Minkowski addition} and the \emph{Pontryagin difference} are defined by: $\mathcal{S}_1 \oplus \mathcal{S}_2 = \{s_1 + s_2 \in \mathbb{R}^n : s_1 \in \mathcal{S}_1, s_2 \in \mathcal{S}_2\}$ and $\mathcal{S}_1 \ominus \mathcal{S}_2 = \{s_1 \in \mathbb{R}^n: s_1+s_2 \in \mathcal{S}_1, \forall \ s_2 \in \mathcal{S}_2\}$, respectively.
\end{definition}

\begin{property} \label{property:set_property}
	Let the sets $\mathcal{S}_1$, $\mathcal{S}_2$, $\mathcal{S}_3 \subseteq \mathbb{R}^n$. Then, it holds that:
	\begin{align}
	(\mathcal{S}_1 \ominus \mathcal{S}_2) \oplus (\mathcal{S}_2 \ominus \mathcal{S}_3) = (\mathcal{S}_1 \oplus \mathcal{S}_2) \ominus (\mathcal{S}_3 \oplus \mathcal{S}_3). \label{eq:a_minus_b_plus_c}
	\end{align}
\end{property}
\begin{proof}
	The proof can be found in Appendix \ref{app:proof_set_property}.
\end{proof}

\begin{definition} \cite{khalil_nonlinear_systems} \label{def:k_class}
A continuous function $\alpha : [0, a) \to \mathbb{R}_{\ge 0} $ belongs to \emph{class $\mathcal{K}$} if it is strictly increasing and $\alpha (0) = 0$. If $a = \infty$ and $\lim\limits_{r \to \infty} \alpha(r) = \infty$, then function $\alpha$ belongs to class $\mathcal{K}_{\infty}$. A continuous function $\beta : [0, a) \times \mathbb{R}_{\ge 0} \to \mathbb{R}_{\ge 0}$ belongs to \emph{class $\mathcal{KL}$} if: $1)$ for a fixed $s$, the mapping $\beta(r,s)$ belongs to class $\mathcal{K}$ with respect to $r$; $2)$ for a fixed $r$, the mapping $\beta(r,s)$ is strictly decreasing with respect to $s$, and it holds that: $\lim\limits_{s \to \infty} \beta(r,s) = 0$.
\end{definition}

\begin{definition} \cite{marquez2003nonlinear}
	\label{def:ISS}
	A nonlinear system $\dot{x} = f(x,u)$, $x \in \mathcal{X}$, $u \in \mathcal{U}$ with initial condition $x(0)$ is said
	to be \emph{Input-to-State Stable (ISS)} in $\mathcal{X}$ if there exist functions $\sigma \in \mathcal{K}$ and $\beta \in \mathcal{KL}$ and constants $k_1$, $k_2 \in \mathbb{R}_{> 0}$ such that: $$\|x(t)\| \leq \beta\big(\|x(0)\|,t\big) + \sigma \left(\displaystyle \sup_{t \in \mathbb{R}_{\ge 0}} \|u(t)\|\right), \forall t \geq 0,$$ for all $x(0) \in \mathcal{X}$ and $u \in \mathcal{U}$ satisfying $\|x(0)\| \leq k_1$ and $\displaystyle \sup_{t \in \mathbb{R}_{\ge 0}} \|u(t)\| \leq k_2$.
\end{definition}

\begin{definition} \cite{marquez2003nonlinear}
	\label{def:ISS_Lyapunov}
	A continuously differentiable function $V: \mathcal{X} \to \mathbb{R}_{\geq 0}$ for the nonlinear system $\dot{x} = f(x, w)$, with $x \in \mathcal{X}$, $w \in \mathcal{W}$ is an \emph{ISS Lyapunov function} in $\mathcal{X}$ if there exist class $\mathcal{K}$ functions
	$\alpha_1$, $\alpha_2$, $\alpha_3$ and $\sigma$, such that: $$\alpha_1\big(\|x\|\big) \leq V\big(x\big) \leq \alpha_2\big(\|x\|\big), \forall \ x \in \mathcal{X},$$ and $$\dfrac{d}{dt} \left[ V\big(x\big) \right] \leq \sigma\big(\|w\|\big) - \alpha_3\big(\|x\|\big), \forall \ x \in \mathcal{X}, w \in \mathcal{W}.$$
\end{definition}

\begin{theorem} \cite{Sontag2008}
	\label{def:ISS_Lyapunov_admit_theorem} 
	A nonlinear system $\dot{x} = f(x,u)$ with $x \in \mathcal{X}$, $u \in \mathcal{U}$ is \emph{Input-to-State Stable} in $\mathcal{X}$ if and only if it admits an ISS Lyapunov function in $\mathcal{X}$.
\end{theorem}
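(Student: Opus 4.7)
The plan is to prove the two directions separately. The sufficiency direction (existence of an ISS Lyapunov function implies ISS) is constructive and relies on a comparison-principle argument, while the necessity direction (ISS implies existence of an ISS Lyapunov function) is the classical converse result of Sontag and Wang, which requires a delicate construction and a smoothing step.

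For sufficiency, suppose $V$ satisfies $\alpha_1(\|x\|) \le V(x) \le \alpha_2(\|x\|)$ and $\frac{d}{dt} V(x) \le \sigma(\|w\|) - \alpha_3(\|x\|)$. The plan is to show that whenever $\|x(t)\|$ exceeds the threshold $\chi(\|w\|_\infty) := \alpha_3^{-1}\!\circ\sigma(\|w\|_\infty)$, the derivative of $V$ along trajectories is strictly negative, so by the comparison lemma $V(x(t))$ decays monotonically until the trajectory enters the sublevel set $\{V \le \alpha_2 \circ \chi(\|w\|_\infty)\}$, after which it remains trapped. Combining with the sandwich bound $\alpha_1(\|x\|) \le V(x)$ then yields an estimate of the form $\|x(t)\| \le \beta(\|x(0)\|, t) + \tilde{\sigma}(\|w\|_\infty)$ for suitable $\beta \in \mathcal{KL}$ and $\tilde{\sigma} \in \mathcal{K}$, which matches Definition \ref{def:ISS}.

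For necessity, the plan is to construct a candidate $V$ directly from the ISS estimate. Given the $\mathcal{KL}$-function $\beta$ and the $\mathcal{K}$-function $\sigma$ guaranteeing ISS, one first invokes Sontag's lemma on $\mathcal{KL}$-estimates to rewrite $\beta(r,t)$ in a separable form $\alpha(\gamma(r) e^{-t})$, and then defines $V(x)$ via a time-weighted measure of the ISS-excess of trajectories emanating from $x$, for instance a supremum over admissible disturbances of an integral of a positive-definite function of $\|x(\tau; x, w)\|$. Continuity of solutions with respect to initial data and inputs ensures $V$ is well-defined and continuous; convolution with a smooth mollifier produces a $\mathcal{C}^1$ function $V$, and differentiation along trajectories combined with the ISS estimate yields the Lyapunov inequality with the required $\alpha_3$ and $\sigma$.

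The main obstacle is the necessity direction: producing a single continuously differentiable function $V$ whose decay estimate $\frac{d}{dt} V(x) \le \sigma(\|w\|) - \alpha_3(\|x\|)$ holds \emph{uniformly} over every admissible disturbance $w \in \mathcal{W}$ is the technically hardest step, and it is precisely where the smoothing and the use of robust forward-completeness become delicate. In contrast, the sufficiency direction is essentially routine once the comparison lemma is invoked. Since the statement is attributed to \cite{Sontag2008}, I would refer to that reference for the full technical details of the converse construction.
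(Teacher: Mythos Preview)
The paper does not actually prove this theorem: it is stated in the preliminaries with a citation to \cite{Sontag2008} and used as a black box in the convergence analysis (Appendix~\ref{app:convergence_analysis}). There is therefore no ``paper's own proof'' to compare your proposal against.

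That said, your outline is a faithful sketch of the classical argument: sufficiency via the comparison lemma and a sublevel-set trapping argument, and necessity via the Sontag--Wang converse construction with Sontag's $\mathcal{KL}$-decomposition lemma and a smoothing step. This is exactly the content of the cited reference, so if you were asked to supply a proof where the paper does not, your plan is the right one; the only caveat is that the necessity direction is genuinely long and technical, and a full proof would require reproducing the construction from the original Sontag--Wang paper rather than just sketching it.
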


\begin{definition} \cite{khalil_nonlinear_systems}
	\label{def:positively_invariant}	
	Consider a system $\dot{x} = f(x)$, $x \in \mathbb{R}^n$. Let $x(t)$ be a solution of the system with initial condition $x_0$. Then, $\mathcal{S} \subseteq \mathbb{R}^n$ is called \emph{positively invariant set} of the system if, for any $x_0 \in \mathcal{S}$ we have $x(t) \in \mathcal{S}$, $t \in \mathbb{R}_{\ge 0}$, along every solution $x(t)$.
\end{definition}

\section{Problem Formulation} \label{sec:problem_formulation}

\subsection{System Model}

Consider a set $\mathcal{V}$ of $N$ agents, $\mathcal{V} = \{ 1,2, \ldots, N\}$, $N \geq 2$, operating in a workspace $\mathcal{D} \subseteq \mathbb{R}^n$. The workspace is assumed to be modeled by a bounded ball $\mathcal{B}\left(x_{\scriptscriptstyle \mathcal{D}},r_{\scriptscriptstyle \mathcal{D}}\right)$, where $x_{\scriptscriptstyle \mathcal{D}} \in \mathbb{R}^n$ and $r_{\scriptscriptstyle \mathcal{D}} \in \mathbb{R}_{>0}$ are its center and radius, respectively.

We consider that over time $t$ each agent $i \in \mathcal{V}$ occupies the ball $\mathcal{B}\left(x_i(t), r_i\right)$, where
	$x_i : \mathbb{R}_{\geq 0} \to \mathbb{R}^n$ is the position of the agent at time $t \in \mathbb{R}_{\ge 0}$, and $r_i < r_\mathcal{D}$ is the radius of the
agent's rigid body. The \emph{uncertain nonlinear dynamics} of each agent $i \in \mathcal{V}$ are given by:
	\begin{align}
	\dot{x}_i(t) = f_i(x_i(t), u_i(t)) + w_i(x_i(t), t), \label{eq:system}
	\end{align}
	where $u_i: \mathbb{R}_{\ge 0} \to \mathbb{R}^m$ stands for the control input of each agent and $f_i: \mathbb{R}^n \times \mathbb{R}^m \to \mathbb{R}^n$ is a twice continuously differentiable vector field satisfying $f_i(0_{n \times 1}, 0_{m \times 1}) = 0_{n \times 1}$. The continuous function $w_i: \mathbb{R}^n \times \mathbb{R}_{\ge 0} \to \mathbb{R}^{n}$ is a term representing \emph{disturbances} and \emph{modeling uncertainties}. We consider bounded inputs and disturbances as $u_i \in \mathcal{U}_i$ and $w_i \in \mathcal{W}_i$, where $\mathcal{U}_i = \{u_i \in \mathbb{R}^m : \|u_i\| \leq \widetilde{u}_i\}$ and $\mathcal{W}_i = \{ w_i \in \mathbb{R}^n : \|w_i\| \leq \widetilde{w}_i\}$, for given finite constants $\widetilde{w}_i$, $\widetilde{u}_i \in \mathbb{R}_{> 0}$, $i \in \mathcal{V}$.
\begin{assumption}
	\label{ass:g_i_g_R_Lipschitz}
	The nonlinear functions $f_i$, $i \in \mathcal{V}$ are \emph{Lipschitz continuous} in $\mathbb{R}^n \times \mathcal{U}_i$ with Lipschitz
	constants $L_{f_i}$. Thus, it holds that:
	\begin{align*}
	\| f_i(x, u) - f_i(x', u) \| \le L_{f_i} \| x - x' \|, \forall \ x, x' \in \mathbb{R}^n, u \in \mathcal{U}_i.
	\end{align*}
\end{assumption}

We consider that in the given workspace there exist $L \in \mathbb{N}$ \emph{static obstacles}, with $\mathcal{L} = \{1, 2, \dots, L\}$, also modeled by the balls $\mathcal{B}\left(x_{\scriptscriptstyle O_\ell}, r_{\scriptscriptstyle O_\ell}\right)$, with centers at positions $x_{\scriptscriptstyle O_\ell} \in \mathbb{R}^n$ and radii $r_{\scriptscriptstyle O_\ell}\in \mathbb{R}_{> 0}$, where $\ell \in \mathcal{L}$. Their positions and sizes are assumed to be known a priori to each agent.
\begin{assumption} \label{ass:measurements_access}
	Agent $i \in\mathcal{V}$ has: $1)$ access to measurements $x_i(t)$ for every $t \in \mathbb{R}_{\ge 0}$; $2)$ A limited sensing range $d_i \in \mathbb{R}_{>0}$ such that: $$d_i > \max_{i,j \in \mathcal{V}, i \neq j, \ell \in \mathcal{L}}\{r_i + r_j, r_i + r_{\scriptscriptstyle O_\ell}\}.$$			
\end{assumption}
The latter implies that each agent is capable of perceiving all other agents and all workspace obstacles. The consequence of points 1 and 2 of Assumption \ref{ass:measurements_access} is that by defining the set of agents $j$ that are within the sensing range of agent $i$ at time $t$ as: $$\mathcal{R}_i(t) \triangleq \{j\in\mathcal{V} \backslash \{i\} : \| x_i(t) - x_j(t) \| < d_i\},$$ agent $i$ is also able to measure at each time instant $t$ the vectors $x_{j}(t)$ of all agents $j \in \mathcal{R}_i(t)$. 

\begin{definition} \label{definition:collision_free_conf} The multi-agent system is in a \emph{collision-free configuration} at a time instant $\tau \in \mathbb{R}_{\ge 0}$ if the following hold:
\begin{enumerate}
\item For every $i$, $j \in \mathcal{V}$, $i \neq j$ it holds that: $\| x_i(\tau) - x_j(\tau) \| > r_{i} + r_{j}$;
\item For every $i \in \mathcal{V}$ and for every $\ell \in \mathcal{L}$ it holds that: $\|x_i(\tau) - x_{\scriptscriptstyle O_\ell}\| > r_{i} + r_{\scriptscriptstyle O_\ell}$;
\item For every $i \in \mathcal{V}$ it holds that: $\|x_{\scriptscriptstyle \mathcal{D}}-x_i(\tau)\| < r_{\scriptscriptstyle \mathcal{D}} - r_i$.
\end{enumerate}	
\end{definition}

\begin{definition}
\noindent The \emph{neighboring set} of agent $i \in \mathcal{V}$ is defined by: $$\mathcal{N}_i = \{j \in \mathcal{V} \backslash \{i\} : j \in \mathcal{R}_i(0) \}.$$ We will refer to agents $j \in \mathcal{N}_i$ as the \textit{neighbors} of agent
$i \in \mathcal{V}$.
\end{definition}

The set $\mathcal{N}_i$ is composed of indices of agents
$j \in \mathcal{V}$ which are within the sensing range of agent $i$ at time
$t=0$. Agents $j \in \mathcal{N}_i$ are agents which agent $i$ is instructed
to keep within its sensing range at all times $t \in \mathbb{R}_{>0}$, and therefore
maintain connectivity with them. While the sets $\mathcal{N}_i$ are introduced
for connectivity maintenance specifications and they are fixed, the sets $\mathcal{R}_i(t)$ are used to ensure collision avoidance, and, in general,
their composition varies through through time.

\begin{assumption} \label{ass:initial_conditions}
	For sake of cooperation needs, we assume that $\mathcal{N}_i \neq \emptyset$, $\forall i \in \mathcal{V}$, i.e., all agents have at least one neighbor. We also assume that at time $t=0$ the multi-agent system is in a \textit{collision-free configuration}, as given in Definition \ref{definition:collision_free_conf}.
\end{assumption}

\subsection{Objectives}

Given the aforementioned modeling of the system, the objective of this paper is the \textit{stabilization of the agents} $i \in \mathcal{V}$ starting
from a collision-free configuration as given in Definition \ref{definition:collision_free_conf} to a desired configuration $x_{i, \text{des}} \in \mathbb{R}^n$, while maintaining connectivity between neighboring agents, and avoiding collisions between agents, obstacles, and the workspace boundary. 

\begin{definition} \label{definition:feasible_steady_state_conf}
	The desired configuration $x_{i, \text{des}} \in \mathbb{R}^n$ of agent $i \in \mathcal{V}$ is \textit{feasible}, if the following hold: $1)$ It is a collision-free configuration according to Definition \ref{definition:collision_free_conf}; $2)$ It does not result in a violation of the connectivity maintenance constraint between neighboring agents, i.e., $\|x_{i, \text{des}} - x_{j, \text{des}}\| < d_i$, $\forall i \in \mathcal{V}, j \in \mathcal{N}_i$.
\end{definition}

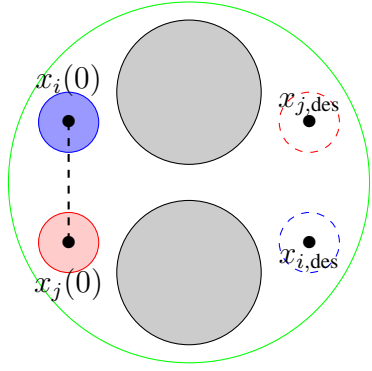
\begin{figure}[t!]
	\centering
	\begin{tikzpicture}[scale = 0.8]
	\draw [color = green] (-3.0, 0) circle (3.0cm);
	\draw [color = black, fill = black!20] (-3, 1.5) circle (1.2cm);
	\draw [color = black, fill = black!20] (-3, -1.5) circle (1.2cm);
	\draw [color = blue, fill = blue!40] (-5.0, 1.0) circle (0.5cm);
	\draw [color = red, fill = red!20] (-5.0, -1.0) circle (0.5cm);
	\draw [color = red, dashed] (-1.0, 1.0) circle (0.5cm);
	\draw [color = blue, dashed] (-1.0, -1.0) circle (0.5cm);
	
	\draw [color=black,thick,dashed,-](-5, 1) to (-5, -1);
	\node at (-5, 1) {$\bullet$};
	\node at (-5, -1) {$\bullet$};
	\node at (-1, 1) {$\bullet$};
	\node at (-1, -1) {$\bullet$};
	\node at (-5.0, -1.7) {$x_j(0)$};
	\node at (-5.0, 1.7) {$x_i(0)$};
	\node at (-1.0, -1.3) {$x_{i, \text{des}}$};
	\node at (-1.0, 1.3) {$x_{j, \text{des}}$};
	
	\end{tikzpicture}
	\caption{An example of infeasible initial conditions. Consider two agents $i$, $j$, depicted with blue and red, respectively, in a workspace (depicted by green) with two obstacles (depicted by gray). It is required to design controllers $u_i$, $u_j$ that navigate the agents from the initial conditions $x_i(0)$, $x_j(0)$ to the desired states $x_{i, \text{des}}$, $x_{j, \text{des}}$, respectively. This forms an infeasible task since from these initial conditions there is no controller that can navigate the agents towards the desired states without colliding with the obstacles and without leaving the workspace.}
	\label{fig:infeasible_initial_conditions}
\end{figure}

\begin{definition} \label{def:set_feasible_initial_conditions}
	Let $x_{i, \text{des}} \in \mathbb{R}^n$, $i \in \mathcal{V}$ be a desired feasible configuration as given in Definition \ref{definition:feasible_steady_state_conf}, respectively. Then, the set of all initial conditions $x_i(0)$ according to Assumption \ref{ass:initial_conditions}, for which there exist time constants $\overline{t}_i \in \mathbb{R}_{> 0} \cup \{\infty\}$ and control inputs $u^\star_i \in \mathcal{U}_i$, $i \in \mathcal{V}$, which define a solution $x_i^\star(t)$, $t \in [0, \overline{t}_i]$ of the system \eqref{eq:system}, under the presence of disturbance $w_i \in \mathcal{W}_i$, such that:
	\begin{enumerate}
	\item $x_i^\star(\overline{t}_i) = x_{i, \text{des}}$;
	\item $\| x_i^\star(t) - x_j^\star(t) \| > r_{i} + r_{j}$, for every $t \in [0, \overline{t}_i]$, $i, j \in \mathcal{V}$, $i \neq j$;
	\item $\|x^\star_i(t) - x_{\scriptscriptstyle O_\ell}\| > r_{i} + r_{\scriptscriptstyle O_\ell}$, for every $t \in [0, \overline{t}_i]$, $i \in \mathcal{V}$, $\ell \in \mathcal{L}$;
	\item $\|x_{\scriptscriptstyle \mathcal{D}}-x^\star_i(t)\| < r_{\scriptscriptstyle \mathcal{D}} - r_i$, for every $t \in [0, \overline{t}_i]$, $i \in \mathcal{V}$;
	\item $\|x^\star_i(t) - x^\star_j(t)\| < d_i$, for every $t \in [0, \overline{t}_i]$, $i \in \mathcal{V}, j \in \mathcal{N}_i$,
	\end{enumerate}
	are called \emph{feasible initial conditions}.
\end{definition}
The feasible initial conditions are essentially all the initial conditions $x_i(0) \in \mathbb{R}^n$, $i \in \mathcal{V}$ from which there exist controllers $u_i \in \mathcal{U}_i$ that can navigate the agents to the given desired states $x_{i, \text{des}}$, under the presence of disturbances $w_i \in \mathcal{W}_i$ while the initial neighbors remain connected, the agents do not collide with each other, they stay in the workspace and they do not collide with the obstacles of the environment. Initial conditions for which one or more agents can not be driven to the desired state $x_{i, \text{des}}$ by a controller $u_i \in \mathcal{U}_i$, i.e., initial conditions that violate one or more of the conditions of Definition \ref{def:set_feasible_initial_conditions}, are considered as \emph{infeasible initial conditions}. An example with infeasible initial conditions is depicted in Fig. \ref{fig:infeasible_initial_conditions}. 

\subsection{Problem Statement}

\noindent Formally, the control problem, under the aforementioned constraints, is formulated as follows:
\begin{problem} \label{problem}
	Consider $N$ agents governed by dynamics as in \eqref{eq:system}, modeled by the balls $\mathcal{B}\left(x_i, r_i\right)$, $i \in \mathcal{V}$, operating in a workspace $\mathcal{D}$ which is modeled by the ball $\mathcal{B}\left(x_{\scriptscriptstyle \mathcal{D}}, r_{\scriptscriptstyle \mathcal{D}}\right)$. In the workspace there are $L$ obstacles $\mathcal{B}\left(x_{\scriptscriptstyle O_\ell}, r_{\scriptscriptstyle O_\ell}\right)$, $\ell \in \mathcal{L}$. The agents have communication capabilities according to Assumption \ref{ass:measurements_access}, under the initial conditions $x_i(0)$, imposed by Assumption \ref{ass:initial_conditions}. Then, given a desired feasible configuration $x_{i, \text{des}}$ according to Definition \ref{definition:feasible_steady_state_conf}, for all feasible initial conditions, as given in Definition \ref{def:set_feasible_initial_conditions}, the problem lies in designing \emph{decentralized feedback control} laws $u_i \in \mathcal{U}_i$, such that for every $i \in \mathcal{V}$ and for all times $t \in \mathbb{R}_{\geq 0}$, the following specifications are satisfied: 
	\begin{enumerate}
		\item Position stabilization is achieved: $\displaystyle \lim_{t \to \infty} \|x_i(t) - x_{i, \text{des}} \| \to 0;$
		\item Inter-agent collision avoidance: $\|x_i(t) - x_j(t)\| > r_{i} + r_{j}, \forall \ j \in \mathcal{V} \backslash \{i\};$
		\item Connectivity maintenance between neighboring agents is preserved: $\|x_i(t) - x_j(t)\| < d_i$, $\forall \ j \in \mathcal{N}_i;$
		\item Agent-with-obstacle collision avoidance: $\|x_i(t) - x_{\scriptscriptstyle O_\ell}(t)\| > r_{i} + r_{\scriptscriptstyle O_\ell}$, $\forall \ \ell \in \mathcal{L};$
		\item Agent-with-workspace-boundary collision avoidance: $ \|x_{\scriptscriptstyle \mathcal{D}}-x_i(t)\| < r_{\scriptscriptstyle \mathcal{D}} - r_i$.
	\end{enumerate}
\end{problem}

\section{Main Results} \label{sec:main_results}

In this section, a systematic solution to Problem \ref{problem} is introduced. Our overall approach builds on designing a decentralized control law $u_i \in \mathcal{U}_i$ for each agent $i \in \mathcal{V}$. In particular, since we aim to minimize the norms $\|x_i(t) - x_{i, \text{des}} \|$, as $t \to \infty$ subject to the state constraints imposed by Problem \ref{problem}, it is reasonable to seek a solution which is the outcome of an optimization problem.

\subsection{Error Dynamics and Constraints} \label{sec:error_dynamics}

Define the error vector $e_i: \mathbb{R}_{\ge 0} \to \mathbb{R}^n$ by: $e_i(t) = x_i(t)-x_{i, \text{des}}$. Then, the \emph{error dynamics} are given by:
\begin{equation} \label{eq:error_system_perturbed}
\dot{e}_i(t) = h_i(e_i(t), u_i(t)),
\end{equation}
where the functions $h_i: \mathbb{R}^n \times \mathbb{R}^m \to \mathbb{R}^{n}$, $g_i: \mathbb{R}^n \times \mathbb{R}^m \to \mathbb{R}^{n}$ are defined by:
\begin{subequations}
\begin{align}
h_i(e_i(t), u_i(t)) & \triangleq g_i(e_i(t), u_i(t)) + w_i(e_i(t)+x_{i, \text{des}}, t), \\
g_i(e_i(t), u_i(t)) & \triangleq f_i(e_i(t)+x_{i, \text{des}}, u_i(t)). \label{eq:function_g_i}
\end{align}
\end{subequations}
Define the set that captures all the \textit{state} constraints on the system \eqref{eq:system}, posed by Problem \ref{problem} by:
\begin{align*}
\mathcal{Z}_{i} \triangleq \Big\{ x_i(t) \in \mathbb{R}^n : \  & \|x_i(t) - x_j(t)\| \ge r_i+r_j+\varepsilon, \forall \ j \in \mathcal{R}_i(t), \notag \\ 
& \|x_i(t) - x_j(t)\| \le d_i - \varepsilon, \forall \ j \in \mathcal{N}_i, \notag \\ 
& \|x_i(t) - x_{\scriptscriptstyle O_\ell}\| \ge r_{i} + r_{\scriptscriptstyle O_\ell} + \varepsilon, \forall \ \ell \in \mathcal{L}, \notag \\ 
& \|x_{\scriptscriptstyle \mathcal{D}}-x_i(t)\| \le r_{\scriptscriptstyle \mathcal{D}} - r_i - \varepsilon \Big\}, i \in \mathcal{V}, 
\end{align*}
where $\varepsilon \in \mathbb{R}_{> 0}$ is an arbitrary small constant. In order to translate the constraints that are dictated for the state $z_i$ into constraints regarding the error state $e_i$, define the set $$\mathcal{E}_{i} = \left\{e_i \in \mathbb{R}^n :
e_i \in \mathcal{Z}_{i} \oplus (-x_{i, \text{des}}) \right\}, i \in \mathcal{V}.$$ Then, the following equivalence holds: $x_i \in \mathcal{Z}_i \Leftrightarrow e_i \in \mathcal{E}_i$, $\forall i \in \mathcal{V}$.

\begin{property} \label{property 1}
	The nonlinear functions $g_i$, $i \in \mathcal{V}$ are \emph{Lipschitz continuous} in $\mathcal{E}_i \times \mathcal{U}_i$ with Lipschitz constants $L_{g_i} = L_{f_i}$. Thus, it  holds that: $$\|g_i(e, u)-g_i(e', u)\| \le L_{g_i} \|e-e'\|, \forall e, e' \in \mathcal{E}_i, u \in \mathcal{U}_i.$$
	\begin{proof}
		The proof can be found in Appendix \ref{app:proof_property_2}.
	\end{proof}
\end{property}		

If the decentralized control laws $u_i \in \mathcal{U}_i$, $i \in \mathcal{V}$, are designed such that the error signal $e_i$ with dynamics given in \eqref{eq:error_system_perturbed}, constrained under $e_i \in \mathcal{E}_{i}$, satisfies $\lim\limits_{t \to \infty} \|e_i(t)\| \to 0$, then Problem \ref{problem} will have been solved. 

\subsection{Decentralized Control Design} \label{sec:control_design}

Due to the fact that we have to deal with the minimization of norms $\|e_i(t) \|$, as $t \to \infty$, subject to constraints $e_i \in \mathcal{E}_i$, we invoke here a class of Nonlinear Model Predictive controllers. NMPC frameworks have been studied in \cite{Mayne2000789, morrari_npmpc, frank_1998_quasi_infinite, cannon_2001_nmpc, camacho_2007_nmpc, fontes_2001_nmpc_stability, frank_2003_towards_sampled-data-nmpc, fontes_2007_modified_barbalat, borrelli_2013_nmpc, grune2016nonlinear, IJC_2017} and they have been proven to be powerful tools for dealing with state and input constraints.

Consider a sequence of sampling times $\{t_k\}$, $k \in \mathbb{N}$, with a constant sampling time $h$, $0 < h < T_p$, where $T_p$ is the prediction horizon, such that $t_{k+1} = t_k + h$, $\forall k \in \mathbb{N}$. Hereafter we will denote by $i$ the agent and by index $k$ the sampling instant. In sampled data NMPC, a Finite-Horizon Open-loop Optimal Control Problem (FHOCP) is solved at the discrete sampling time instants $t_k$ based on the current state error measurement $e_i(t_k)$. The solution is an optimal control signal $\overline{u}_i^{\star}(s)$, computed over $s \in [t_k, t_k+T_p]$. The open-loop input signal applied in between the sampling instants is given by the solution of the following FHOCP:
\begin{subequations}
	\begin{align}
	&\hspace{-1mm}\min\limits_{\overline{u}_i(\cdot)} J_i(e_i(t_k), \overline{u}_i(\cdot)) \notag \\ 
	&\hspace{-1mm}= \min\limits_{\overline{u}_i(\cdot)} \left\{  V_i(\overline{e}_i(t_k+T_p)) + \int_{t_k}^{t_k+T_p} \Big[ F_i(\overline{e}_i(s), \overline{u}_i(s)) \Big] ds \right\}  \label{mpc_position_based_cost_2}\\
	&\hspace{-1mm}\text{subject to:} \notag \\
	&\hspace{1mm} \dot{\overline{e}}_i(s) = g_i(\overline{e}_i(s), \overline{u}_i(s)), \overline{e}_i(t_k) = e_i(t_k), \label{eq:diff_mpc} \\
	&\hspace{1mm} \overline{e}_i (s) \in \mathcal{E}_{i, s - t_k}, \overline{u}_i(s) \in \mathcal{U}_i, s \in [t_k,t_k+T_p], \label{eq:mpc_constrained_set} \\
	&\hspace{1mm} \overline{e}(t_k+T_p)\in \Omega_i. \label{eq:mpc_terminal_set}
	\end{align}
\end{subequations}
At a generic time $t_k$ then, agent $i \in \mathcal{V}$ solves the aforementioned FHOCP. The notation $\overline{\cdot}$ is used to distinguish predicted states which are internal to the controller, corresponding to the nominal system \eqref{eq:diff_mpc} (i.e., the system \eqref{eq:error_system_perturbed} by substituting $w(\cdot) = 0_{n \times 1}$). This means that $\overline{e}_i(\cdot)$ is the solution to \eqref{eq:diff_mpc} driven by the control input $\overline{u}_i(\cdot) : [t_k, t_k + T_p] \to \mathcal{U}_i$ with initial condition $e_i(t_k)$. Note that the predicted states are not the same with the actual closed-loop values due to the fact that the system is under the presence of disturbances $w_i \in \mathcal{W}_i$. The functions $F_i : \mathcal{E}_{i} \times \mathcal{U}_i \to \mathbb{R}_{\geq 0}$, $V_i: \mathcal{E}_i \to \mathbb{R}_{\geq 0}$ stand for the \emph{running costs} and the \emph{terminal penalty costs}, respectively, and they are defined by: $F_i \big(e_i, u_i\big) = e_i^{\top} Q_i e_i + u_i^{\top} R_i u_i$, $V_i \big(e_i\big) = e_i^{\top} P_i e_i$; $R_i \in \mathbb{R}^{m \times m}$ and $Q_i$, $P_i \in \mathbb{R}^{n \times n}$ are symmetric and positive definite controller gain matrices to be appropriately tuned; $Q_i \in \mathbb{R}^{n \times n}$ is a symmetric and positive semi-definite controller gain matrix to be appropriately tuned. The sets $\mathcal{E}_{i, s - t_k}$, $\Omega_i$ will be explained later. For the running costs $F_i$ the following hold:
\begin{lemma} \label{lemma:F_i_bounded_K_class}
There exist functions $\alpha_1$, $\alpha_2 \in \mathcal{K}_{\infty}$ such that: $$\alpha_1\big(\|\eta_i\|\big) \leq F_i\big(e_i, u_i\big) \leq \alpha_2\big(\|\eta_i \|\big),$$ for every $\eta_i \triangleq \left[ e_i^\top, u_i^\top\right]^\top \in \mathcal{E}_{i} \times \mathcal{U}_i$, $i \in \mathcal{V}$.
\end{lemma}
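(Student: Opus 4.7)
The plan is to exploit the fact that $F_i$ is a quadratic form in the stacked vector $\eta_i$. First I would introduce the block diagonal matrix
\begin{equation*}
M_i = \begin{bmatrix} Q_i & 0_{n\times m} \\ 0_{m\times n} & R_i \end{bmatrix} \in \mathbb{R}^{(n+m)\times(n+m)},
\end{equation*}
and observe that the running cost can be rewritten as $F_i(e_i,u_i) = \eta_i^{\top} M_i \eta_i$, with $\|\eta_i\|^2 = \|e_i\|^2 + \|u_i\|^2$. Since $Q_i$ and $R_i$ are symmetric and positive definite, so is $M_i$ (its eigenvalues are the union of those of $Q_i$ and $R_i$).

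Next, I would apply the standard Rayleigh-quotient sandwich for symmetric positive definite matrices to obtain
\begin{equation*}
\lambda_{\min}(M_i)\,\|\eta_i\|^2 \;\leq\; F_i(e_i,u_i) \;\leq\; \lambda_{\max}(M_i)\,\|\eta_i\|^2,
\end{equation*}
where $\lambda_{\min}(M_i) = \min\{\lambda_{\min}(Q_i), \lambda_{\min}(R_i)\} > 0$ and $\lambda_{\max}(M_i) = \max\{\lambda_{\max}(Q_i), \lambda_{\max}(R_i)\} > 0$.

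Finally, I would define the candidate bounding functions $\alpha_1(r) \triangleq \lambda_{\min}(M_i)\,r^2$ and $\alpha_2(r) \triangleq \lambda_{\max}(M_i)\,r^2$ on $\mathbb{R}_{\geq 0}$. Both are continuous, vanish at $r=0$, are strictly increasing, and satisfy $\lim_{r \to \infty} \alpha_j(r) = \infty$, so they belong to class $\mathcal{K}_{\infty}$ in the sense of Definition \ref{def:k_class}. The desired inequality $\alpha_1(\|\eta_i\|) \leq F_i(e_i,u_i) \leq \alpha_2(\|\eta_i\|)$ then follows directly from the Rayleigh bound above for all $\eta_i \in \mathcal{E}_i \times \mathcal{U}_i$.

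I do not anticipate any real obstacle: the argument is a one-step quadratic-form bound. The only point worth flagging is that the lemma implicitly requires $Q_i$ to be \emph{positive definite} (not merely positive semidefinite as one sentence of the problem setup hints), since otherwise $\alpha_1$ would fail to be strictly increasing and hence would not belong to class $\mathcal{K}_{\infty}$; under the stated hypothesis that $Q_i$ and $R_i$ are positive definite, the proof is immediate.
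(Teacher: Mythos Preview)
Your proposal is correct and follows essentially the same approach as the paper: both arguments reduce to the Rayleigh-quotient bound for the quadratic form, arriving at the identical $\mathcal{K}_\infty$ functions $\alpha_1(r)=\min\{\lambda_{\min}(Q_i),\lambda_{\min}(R_i)\}r^2$ and $\alpha_2(r)=\max\{\lambda_{\max}(Q_i),\lambda_{\max}(R_i)\}r^2$. Your block-diagonal packaging via $M_i$ is a slightly cleaner presentation of the same computation, and your caveat about $Q_i$ needing to be positive definite (rather than merely semidefinite) is well observed.
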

\begin{proof}
	The proof can be found in Appendix \ref{app:proof_lemma_1}.
\end{proof}
\begin{lemma}\label{lemma:F_Lipschitz}
	The running costs $F_i$, $i \in \mathcal{V}$ are Lipschitz continuous in $\mathcal{E}_{i} \times \mathcal{U}_i$. Thus, it holds that: $$\big|F_i(e, u) - F_i(e', u)\big| \leq L_{F_i} \|e - e'\|, \forall e, e' \in \mathcal{E}_i, u \in \mathcal{U}_i,$$
	where $L_{F_i} \triangleq 2 \sigma_{\max}(Q_i) \sup\limits_{e \in \mathcal{E}_{i}} \|e\|$.
\end{lemma}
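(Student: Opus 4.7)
The plan is to exploit the fact that the $u$-argument is identical on both sides of the inequality, so the quadratic term in $u$ cancels and the proof reduces to bounding $|e^\top Q_i e - e'^\top Q_i e'|$ over the set $\mathcal{E}_i$. First I would write
\[
F_i(e,u)-F_i(e',u) = e^\top Q_i e - e'^\top Q_i e',
\]
and then, using the symmetry of $Q_i$, apply the standard factorization
\[
e^\top Q_i e - e'^\top Q_i e' = (e-e')^\top Q_i (e+e'),
\]
which is obtained by adding and subtracting $e^\top Q_i e'$ and grouping.

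Next I would take absolute values and apply Cauchy--Schwarz followed by the submultiplicativity of the induced matrix norm, giving
\[
|F_i(e,u)-F_i(e',u)| \le \|Q_i\| \, \|e-e'\| \, \|e+e'\|.
\]
Since $Q_i$ is symmetric and positive definite, its induced $2$-norm equals $\sigma_{\max}(Q_i)$. For the remaining factor, a triangle inequality gives $\|e+e'\| \le \|e\|+\|e'\| \le 2\sup_{e\in\mathcal{E}_i}\|e\|$, and combining these bounds yields exactly the Lipschitz constant $L_{F_i}=2\sigma_{\max}(Q_i)\sup_{e\in\mathcal{E}_i}\|e\|$ stated in the lemma.

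The only nontrivial point, and what I expect to be the main obstacle, is ensuring that $\sup_{e \in \mathcal{E}_i}\|e\|$ is actually finite, since otherwise $L_{F_i}$ would be meaningless. I would briefly justify this by recalling that $\mathcal{E}_i = \mathcal{Z}_i \oplus (-x_{i,\text{des}})$ and that $\mathcal{Z}_i$ is contained in the workspace ball $\mathcal{B}(x_{\scriptscriptstyle \mathcal{D}},r_{\scriptscriptstyle \mathcal{D}})$ by virtue of the constraint $\|x_{\scriptscriptstyle \mathcal{D}}-x_i\|\le r_{\scriptscriptstyle \mathcal{D}}-r_i-\varepsilon$ built into $\mathcal{Z}_i$. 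Hence $\mathcal{E}_i$ is bounded and the supremum is attained on its closure, making $L_{F_i}$ a well-defined finite constant. No further technicalities are needed, and the argument is completely independent of the control input $u$, which is why the bound involves only $Q_i$ and not $R_i$.
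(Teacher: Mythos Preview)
Your argument is correct and follows essentially the same route as the paper: cancel the $u$-term, factor the difference of quadratics via symmetry of $Q_i$, and bound using $\sigma_{\max}(Q_i)$ together with the supremum of $\|e\|$ over $\mathcal{E}_i$. Your single factorization $(e-e')^\top Q_i(e+e')$ is slightly cleaner than the paper's two-term split, and your explicit remark that $\mathcal{E}_i$ is bounded (hence $L_{F_i}$ is finite) fills in a point the paper leaves implicit.
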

\begin{proof}
	The proof can be found in Appendix \ref{app:proof_of_F_lipsitz}.
\end{proof}

The applied input signal is a portion of the optimal solution to an
optimization problem where information on the states of the neighboring
agents of agent $i$ is taken into account only in the constraints considered
in the optimization problem. These constraints pertain to the set of its
neighbors $\mathcal{N}_i$ and, in total, to the set of all agents within its
sensing range $\mathcal{R}_i$. Regarding these, we make the following assumption:
\begin{assumption}
	\label{ass:access_to_predicted_info_n}
	When at time $t_k$ agent $i$ solves a FHOCP, it has access to the following measurements, across the entire horizon $s \in (t_k, t_k + T_p]$: 
	\begin{enumerate}
		\item Measurements of the states:
		\begin{itemize}
			\item $x_j(t_k)$ of all agents $j \in \mathcal{R}_i(t_k)$ within its sensing range at time $t_k$;
			\item $x_{j'}(t_k)$ of all of its neighboring agents $j' \in \mathcal{N}_i$ at time $t_k$;
		\end{itemize}
		\item The \textit{predicted states}:
		\begin{itemize}
			\item $\overline{x}_j(s)$ of all agents $j \in \mathcal{R}_i(t_k)$ within its sensing range;
			\item $\overline{x}_{j'}(s)$ of all of its neighboring agents $j' \in \mathcal{N}_i$;
		\end{itemize}
	\end{enumerate}
\end{assumption}

In other words, each time an agent solves its own individual optimization
problem, it knows the (open-loop) state predictions that have been generated
by the solution of the optimization problem of all agents within
its range at that time, for the next $T_p$ time units. This assumption is
crucial to satisfying the constraints regarding collision avoidance and
connectivity maintenance between neighboring agents.
We assume that the above pieces of information are \emph{always available},
\emph{accurate} and can be exchanged \emph{without delay}. 

\begin{figure}[t!]
	\centering
	\includegraphics[scale = 0.36]{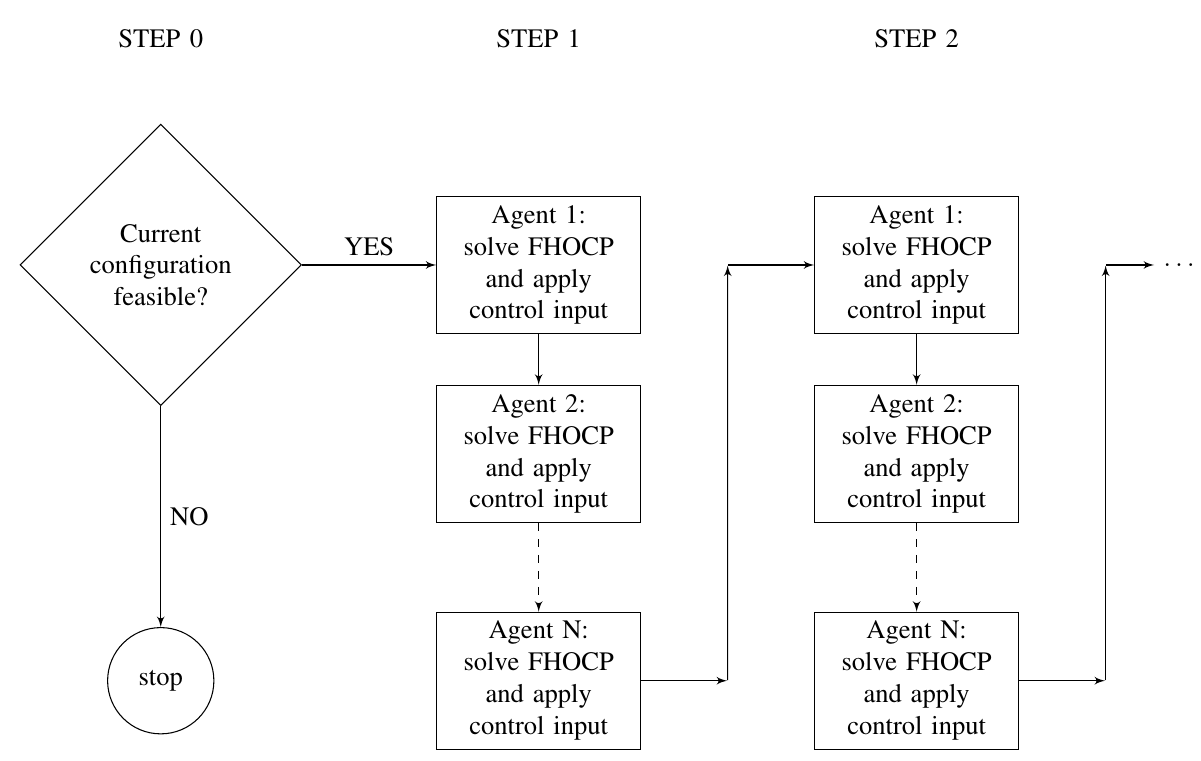}
	\caption{The procedure is approached sequentially. Notice that the
		figure implies that recursive feasibility is established if the initial
		configuration is itself feasible.}
	\label{fig:process_flow}
\end{figure}

\begin{remark}
	The designed procedure flow can be either concurrent or sequential,
	meaning that agents can solve their individual FHOCP’s and apply the control
	inputs either simultaneously, or one after the other. The conceptual
	design itself is procedure-flow agnostic, and hence it can incorporate both
	without loss of feasibility or successful stabilization. The approach that we
	have adopted here is the sequential one: each agent solves its own FHOCP and
	applies the corresponding admissible control input in a round robin way,
	considering the current and planned (open-loop state predictions)
	configurations of all agents within its sensing range.
	Figure \ref{fig:process_flow} and Figure \ref{fig:information_flow} depict
	the sequential procedural and informational regimes.
\end{remark}

The constraint sets $\mathcal{E}_i$, $i \in \mathcal{V}$ involve the sets $\mathcal{R}_i(t)$ which are updated at every sampling time in which agent $i$ solves his own optimization problem. Its predicted configuration at time $s \in [t_k, t_k + T_p]$ is constrained by the predicted configuration of its neighboring and perceivable agents (agents within its sensing range) at the same time instant $s$.  

The solution to FHOCP \eqref{mpc_position_based_cost_2} - \eqref{eq:mpc_terminal_set} at time $t_k$ provides an optimal control input, denoted by
$\overline{u}_i^{\star}(s;\ e_i(t_k))$, $s \in [t_k, t_k + T_p]$. This control input is then applied to the system until the next sampling instant $t_{k+1}$:
\begin{align}
u_i(s; e_i(t_k)) = \overline{u}_i^{\star}\big(s; \ e_i(t_k)\big),\  s \in [t_k, t_{k+1}).
\label{eq:position_based_optimal_u_2}
\end{align}
At time $t_{k+1}$ a new finite horizon optimal control problem is solved in the
same manner, leading to a receding horizon approach. The control input $u_i(\cdot)$ is of feedback form, since it is recalculated at each sampling instant based on the then-current state. The solution of \eqref{eq:error_system_perturbed} at time $s$, $s \in [t_k, t_k+T_p]$, starting at time $t_k$, from an initial condition $e_i(t_k) = \overline{e}_i(t_k)$, by application of the control input $u_i : [t_k, s] \to \mathcal{U}_i$ is denoted by $e_i\big(s;\ u_i(\cdot), e_i(t_k)\big)$, $s \in [t_k, t_k+T_p]$. The \textit{predicted} state of the system \eqref{eq:diff_mpc} at time $s$, $s \in [t_k, t_k+T_p]$ based on the measurement of the state at time $t_k$, $e_i(t_k)$, by application of the control input $u_i\big(t;\ e_i(t_k)\big)$ as in \ref{eq:position_based_optimal_u_2}, is denoted by $\overline{e}_i\big(s;\ u_i(\cdot), e_i(t_k)\big)$, $s \in [t_k, t_k+T_p]$.  Due to the fact that the system is in presence of disturbances $w_i \in \mathcal{W}_i$, it holds that: $\overline{e}_i(\cdot) \neq e_i(\cdot)$.
\begin{property}
	By integrating \eqref{eq:error_system_perturbed}, \eqref{eq:diff_mpc} at the
	time interval $s \ge \tau$, the actual $e_i(\cdot)$ and the predicted
	states $\overline{e}_i(\cdot)$ are respectively given by:
	\begin{subequations}
		\begin{align}
		e_i\big(s;\ u_i(\cdot), e_i(\tau)\big) &=
		e_i(\tau) + \int_{\tau}^{s} h_i\big(e_i(s';\ e_i(\tau)), u_i(s)\big) ds', \label{eq:remark_4_eq_1} \\
		\overline{e}_i\big(s;\ u_i(\cdot), e_i(\tau)\big) &=
		e_i(\tau) + \int_{\tau}^{s} g_i\big(\overline{e}_i(s';\ e_i(\tau)), u_i(s')\big) ds'. \label{eq:remark_4_eq_2}
		\end{align}
	\end{subequations}
	\label{remark:predicted_actual_equations_with_disturbance}
\end{property}
\begin{figure}[t!]\centering
	\scalebox{0.6}{
\tikzstyle{decision} = [diamond, draw, 
    text width=7.5em, text badly centered, node distance=3cm, inner sep=0pt]
\tikzstyle{block} = [rectangle, draw, 
    text width=10em, text centered, minimum height=4em]
\tikzstyle{block_rounded} = [rectangle, rounded corners, draw, 
    text width=12em, text centered, minimum height=4em]
\tikzstyle{line} = [draw, -latex']
\tikzstyle{cloud} = [draw, ellipse, 
    node distance=3cm, minimum height=4em, minimum width=4em]

\begin{tikzpicture}[node distance = 2cm, auto]


    \node [block, node distance=1.5cm] (system_m) {Agent $m \in \mathcal{R}_i(t_k)$};
    \node [block, below of=system_m, node distance=2.5cm] (system_n) {Agent $n \in \mathcal{R}_i(t_k)$};

    \node[right of=system_n, node distance=4.5cm] (right_of_n){};

    \node [block_rounded, below of=right_of_n, node distance=2.5cm] (latest_plans) {Latest predictions (current timestep $t_k$)};
    \node [block, below of=latest_plans, node distance=2.5cm] (agent_i) {Agent $i$};
    \node [block_rounded, below of=agent_i, node distance=2.5cm] (latest_plans_) {Latest predictions \\ (previous timestep $t_{k-1}$)};

    \node[left of=latest_plans_, node distance=4.5cm] (left_of_i){};

    \node [block, below of=left_of_i, node distance=2.5cm] (system_p) {Agent $p \in \mathcal{R}_i(t_k)$};
    \node [block, below of=system_p, node distance=2.5cm] (system_q) {Agent $q \in \mathcal{R}_i(t_k)$};

    \node [above of=system_m, node distance=1.5cm] (dots_0) {$\vdots$};
    \node [below of=system_q, node distance=1.5cm] (dots_1) {$\vdots$};

    \node[inner sep=0,minimum size=0,right of=system_m, node distance=5.5cm] (sm) {};
    \node[inner sep=0,minimum size=0,right of=system_n, node distance=4.5cm] (sn) {};
    \node[inner sep=0,minimum size=0,right of=system_p, node distance=4.5cm] (sp) {};
    \node[inner sep=0,minimum size=0,right of=system_q, node distance=5.5cm] (sq) {};

    \path[line]           (system_m) -- (sm);
    \path[line]           (system_n) -- (sn);
    \path[line]           (sm) -- (latest_plans.38);
    \path[line]           (sn) -- (latest_plans);

    \path[line]           (system_p) -- (sp);
    \path[line]           (system_q) -- (sq);
    \path[line]           (sp) -- (latest_plans_);
    \path[line]           (sq) -- (latest_plans_.-38);

    \path[line,dashed]    (latest_plans) -- (agent_i);
    \path[line,dashed]    (latest_plans.-38) -- (agent_i.38);

    \path[line,dashed]    (latest_plans_) -- (agent_i);
    \path[line,dashed]    (latest_plans_.38) -- (agent_i.-38);




\end{tikzpicture}}
	\caption{The flow of information to agent $i$ regarding its perception of
		agents within its sensing range $\mathcal{R}_i$ at arbitrary FHOCP
		solution time $t_k$. Agents $m,n \in \mathcal{R}_i(t_k)$ have solved their
		FHOCP; agent $i$ is next; agents $p,q \in \mathcal{R}_i(t_k)$ have not
		solved their FHOCP yet.}
	\label{fig:information_flow}
\end{figure}
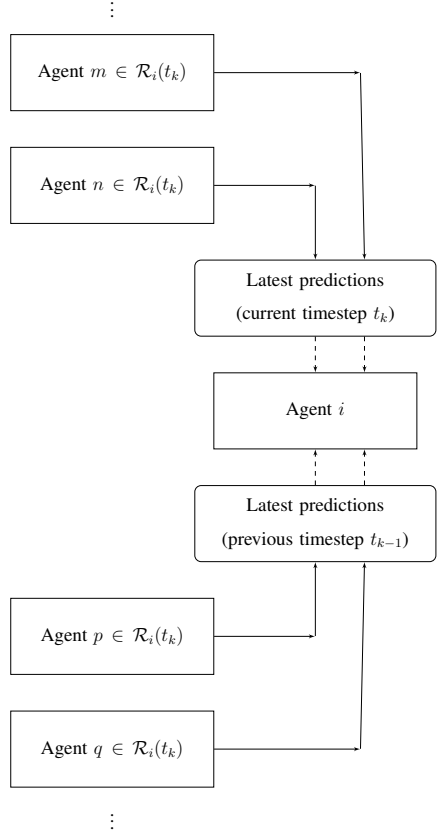

\vspace{-7mm}

The satisfaction of the constraints $\mathcal{E}_i$ on the state along the prediction horizon depends on the future realization of the uncertainties. On the assumption of additive uncertainty and Lipschitz continuity of the nominal model, it is possible to compute a bound on the future effect of the uncertainty on the system. Then, by considering this effect on the state constraint on the nominal prediction, it is possible to guarantee that the evolution of the real state of the system will be admissible all the time. In view of latter, the state constraint set $\mathcal{E}_i$ of the standard NMPC formulation, is being replaced by a restricted constrained set $\mathcal{E}_{s-t_k} \subseteq \mathcal{E}_i$ in \eqref{eq:mpc_constrained_set}. This state constraint's tightening for the nominal system \eqref{eq:diff_mpc} with additive disturbance $w_i \in \mathcal{W}_i$, is a key ingredient of the proposed controller and guarantees that the evolution of the evolution of the real system will be admissible for all times. If the state constraint set was left unchanged during the solution of the optimization problem, the applied input to the plant, coupled with the uncertainty affecting the states of the plant could force the states of the plant to escape their intended bounds. The aforementioned tightening set strategy is inspired by the works \cite{1185106, Fontes2007, alina_ecc_2011}, which have considered such a robust NMPC formulation.

\begin{lemma} \label{lemma:diff_state_from_same_conditions}
	The difference between the actual measurement $e_i\big(t_k + s;\ u_i(\cdot), e_i(t_k)\big)$ at time $t_k+s$, $s \in (0, T_p]$, and the predicted state $\overline{e}_i\big(t_k + s;\ u_i(\cdot), e_i(t_k)\big)$ at the same time, under a control input $u_i(\cdot) \in \mathcal{U}_i$, starting at the same initial state $e_i(t_k)$ is upper bounded by: $$\left\| e_i\big(t_k + s;\ u_i(\cdot), e_i(t_k)\big) -
	\overline{e}_i\big(t_k + s;\ u_i(\cdot), e_i(t_k)\big) \right\| \leq \dfrac{\widetilde{w}_i}{L_{g_i}} (e^{L_{g_i} s} - 1), s \in (0, T_p],$$ where $e^\cdot$ denotes the exponential function.
\end{lemma}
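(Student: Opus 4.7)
The plan is to set up the standard Gronwall-type estimate for the deviation between the perturbed and nominal trajectories that start from the same initial state under the same control input. Let me write $\phi(s) \triangleq \left\| e_i(t_k+s;\ u_i(\cdot), e_i(t_k)) - \overline{e}_i(t_k+s;\ u_i(\cdot), e_i(t_k)) \right\|$ for $s \in [0, T_p]$, so that $\phi(0) = 0$ since both trajectories are initialized at $e_i(t_k)$.

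First, I would invoke Property \ref{remark:predicted_actual_equations_with_disturbance} to express both the actual and predicted states in integral form, taking the difference
\begin{align*}
e_i(t_k+s) - \overline{e}_i(t_k+s) = \int_{t_k}^{t_k+s} \bigl[ g_i(e_i(s'), u_i(s')) - g_i(\overline{e}_i(s'), u_i(s')) \bigr] \, ds' + \int_{t_k}^{t_k+s} w_i(e_i(s')+x_{i,\text{des}}, s') \, ds'.
\end{align*}
Taking norms, applying the triangle inequality, the Lipschitz property of $g_i$ from Property \ref{property 1} (with constant $L_{g_i}$) on the first integrand, and the bound $\|w_i\| \leq \widetilde{w}_i$ on the second, gives the integral inequality
\begin{align*}
\phi(s) \leq \widetilde{w}_i \, s + L_{g_i} \int_0^s \phi(s') \, ds', \qquad s \in [0, T_p].
\end{align*}

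Next, I would apply the Gronwall-Bellman inequality to this integral inequality. Equivalently, one can differentiate: since $\phi$ is absolutely continuous with $\phi(0)=0$ and $\dot{\phi}(s) \leq L_{g_i}\phi(s) + \widetilde{w}_i$, the comparison principle against the linear ODE $\dot{\psi}(s) = L_{g_i}\psi(s) + \widetilde{w}_i$, $\psi(0)=0$, whose explicit solution is $\psi(s) = \frac{\widetilde{w}_i}{L_{g_i}}(e^{L_{g_i} s} - 1)$, yields the claimed bound
\begin{align*}
\phi(s) \leq \frac{\widetilde{w}_i}{L_{g_i}} \bigl( e^{L_{g_i} s} - 1 \bigr), \qquad s \in (0, T_p].
\end{align*}

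I do not anticipate a serious obstacle here: the argument is a textbook Gronwall estimate, and all the ingredients (Lipschitz continuity of $g_i$, uniform bound $\widetilde{w}_i$ on the disturbance, and identical initial conditions and control inputs, which make $\phi(0)=0$) are in place from the earlier development. The only subtle point is ensuring that throughout $[0,T_p]$ the trajectory $e_i(\cdot)$ remains in the domain $\mathcal{E}_i$ where the Lipschitz constant $L_{g_i}$ from Property \ref{property 1} applies; this is implicit in the NMPC setup since the constraint-tightening construction is precisely being designed so that feasibility of the restricted nominal constraint set implies admissibility of the real trajectory, and Property \ref{property 1} in fact states Lipschitz continuity on $\mathcal{E}_i \times \mathcal{U}_i$ which covers the trajectories involved.
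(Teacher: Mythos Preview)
Your proposal is correct and follows essentially the same route as the paper: both take the difference of the integral representations from Property~\ref{remark:predicted_actual_equations_with_disturbance}, apply the Lipschitz bound on $g_i$ and the uniform bound $\widetilde{w}_i$ on the disturbance to obtain $\phi(s) \leq \widetilde{w}_i s + L_{g_i}\int_0^s \phi(s')\,ds'$, and then invoke the Gr\"onwall--Bellman inequality to conclude. The paper writes out the explicit Gr\"onwall integral $s\widetilde{w}_i + L_{g_i}\int_0^s s'\widetilde{w}_i e^{L_{g_i}(s-s')}ds'$ rather than your comparison-principle variant, but this is a cosmetic difference.
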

\begin{proof}
	The proof can be found in Appendix \ref{app:proof_lemma_diff_state}.
\end{proof}
\noindent By taking into consideration the aforementioned Lemma, the restricted constraints set are then defined by: $\mathcal{E}_{i, s-t_k} = \mathcal{E}_i \ominus \mathcal{X}_{i,s-t_k}$, with $$\mathcal{X}_{i,s-t_k} = \Big\{ e_i \in \mathbb{R}^n : \|e_i(s)\| \leq \dfrac{\widetilde{w}_i}{L_{g_i}}\big( e^{L_{g_i}(s - t_k)} - 1\big), \forall s \in [t_k, t_k + T_p] \Big\}.$$ This modification guarantees that the state of the real system $e_i$ is always satisfying the corresponding constraints $\mathcal{E}_i$. 
	\begin{property}
		\label{property:restricted_constraint_set}
		
		For every $s \in [t_k, t_k + T_p]$, it holds that if:
		\begin{align}
		\overline{e}_i\big( s;\ u_i(\cdot,\ e_i(t_k)), e_i(t_k) \big) \in \mathcal{E}_i \ominus \mathcal{X}_{i,s-t_k},
		\end{align}
		then the real state $e_i$ satisfies the constraints $\mathcal{E}_i$, i.e., $e_i(s) \in \mathcal{E}_i$.
	\end{property}
	\begin{proof}
		The proof can be found in Appendix \ref{app:proof_of_property_restricted_const}.
	\end{proof}
	
\begin{figure}[t!]
	\centering
	\begin{tikzpicture}[scale = 0.6, rotate=-30]
  \draw (2,2) ellipse (6cm and 3cm);
    \node at ($(2.5,2.5)+(75:6 and 3)$) {$\mathcal{E}_i$};
  \draw[dashed] (2,2) ellipse (5cm and 2.5cm);
    \node at ($(-2.3,1.2)+(75:5 and 2.5)$) {$\mathcal{E}_i \ominus \mathcal{X}_{i,t_{k+1} - t_k}$};
  \draw[dashed] (2,2) ellipse (2cm and 1cm);
    \node at ($(1.2,1.2)+(75:2 and 1)$) {$\mathcal{E}_i \ominus \mathcal{X}_{i, T_p}$};

  \node at (5.5,2) {$\dots$};
  \node at (-1.5,2) {$\dots$};

\pgflowlevel{\pgftransformrotate{30}}
\end{tikzpicture}
	\caption{The nominal constraint set $\mathcal{E}_i$ in bold and the
		consecutive restricted constraint sets $\mathcal{E}_i \ominus \mathcal{X}_{i, s-t_k}$,
		$s \in [t_k, t_k + T_p]$, dashed.}
	\label{fig:tightening_high_level}
\end{figure}
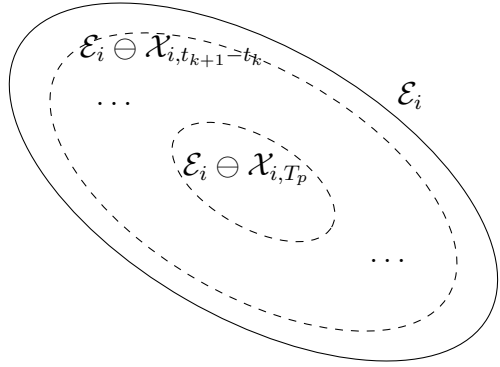

\begin{assumption}
	\label{ass:psi}
	The terminal set $\Omega_i \subseteq \Psi_i$ is a subset of an admissible and positively invariant set $\Psi_i$ as per Definition \ref{def:positively_invariant}, where $\Psi_i$ is defined as $\Psi_i \triangleq \big\{e_i \in \mathcal{E}_i : V_i(e_i) \leq \varepsilon_{\Psi_i} \big\},\ \varepsilon_{\Psi_i} > 0$.
\end{assumption}

\begin{assumption} \label{ass:psi_psi}
	The set $\Psi_i$ is interior to the set $\Phi_i$, $\Psi_i \subseteq \Phi_i$,
	which is the set of states within $\mathcal{E}_{i,T_p-h}$ for which
	there exists an admissible control input which is of linear feedback form with respect to the state $\kappa_i : [0,h] \to \mathcal{U}_i$: $\Phi_i \triangleq \big\{e_i \in \mathcal{E}_{i,T_p-h} : \kappa_i(e_i) \in \mathcal{U}_i \big\}$, such that for all $e_i \in \Psi_i$ and for all $s \in [0,h]$ it holds that: 
	\begin{equation} \label{eq:phi_psi}
	\dfrac{\partial V_i}{\partial e_i} g_i(e_i(s), \kappa_i(s))+ F_i(e_i(s), \kappa_i(s)) \leq 0.
	\end{equation}
\end{assumption}

\begin{remark} \label{remark:aux_control_stabilizability}
	According to \cite{262032, FINDEISEN2003190}, the existence of the linear state-feedback control law $\kappa_i$ is ensured if for every $i \in \mathcal{V}$ the following conditions hold:
	\begin{enumerate}
	\item $f_i$ is twice continuously differentiable with $f_i(0_{n \times 1}, 0_{m \times 1}) = 0_{n \times 1}$;
	\item Assumption \ref{ass:g_i_g_R_Lipschitz} holds;
	\item the sets $\mathcal{U}_i$ are compact with $0_{m \times 1} \in \mathcal{U}_i$, and
	\item the linearization of system \eqref{eq:error_system_perturbed} is stabilizable.
	\end{enumerate}
\end{remark}

\begin{assumption}
	\label{ass:psi_omega}
	The admissible and positively invariant set $\Psi_i$ is such that $\forall e_i(t) \in \Psi_i \Rightarrow e_i\big(t+s;\ \kappa_i(e_i(t)), e_i(t)\big) \in \Omega_i \subseteq \Psi_i$, for some $s \in [0,h]$.
\end{assumption}

\noindent The terminal sets $\Omega_i$ are chosen as: $\Omega_i \triangleq \big\{e_i \in \mathcal{E}_i : V_i(e_i)
\leq \varepsilon_{\Omega_i}\big\} \text{, where } \varepsilon_{\Omega_i} \in (0, \varepsilon_{\Psi_i})$.

\begin{figure}[t!]
	\centering
	\begin{tikzpicture}[scale = 0.7, rotate=-30]
  \draw[dashed](2,2) ellipse (5cm and 2.5cm);
    \node at ($(2.7,2.7)+(75:5 and 2.5)$) {$\mathcal{E}_i \ominus \mathcal{X}_{i,T_p-h}$};
  \draw[dashdotted](2,2) ellipse (4cm and 2cm);
    \node at ($(2.2,2.2)+(75:4 and 2)$) {$\Phi_i$};
  \draw[dashdotted] (2,2) ellipse (3cm and 1.5cm);
    \node at ($(2.2,2.2)+(75:3 and 1.5)$) {$\Psi_i$};
  \draw (2,2) ellipse (2cm and 1cm);
    \node at ($(2.2,2.2)+(75:2 and 1)$) {$\Omega_i$};
\pgflowlevel{\pgftransformrotate{30}}
\end{tikzpicture}
	\caption{The hierarchy of sets
		$\Omega_i \subseteq \Psi_i \subseteq \Phi_i \subseteq \mathcal{E}_{i,T_p-h}$,
		in bold, dash-dotted, dash-dotted, and dashed, respectively.
		For every state in $\Phi_i$ there is a linear state feedback control
		$\kappa_i(e_i)$ which, when applied to a state
		$e_i \in \Psi_i$, forces the trajectory of the state of the system to
		reach the terminal set $\Omega_i$.}
	\label{fig:tightening_low_level}
\end{figure}
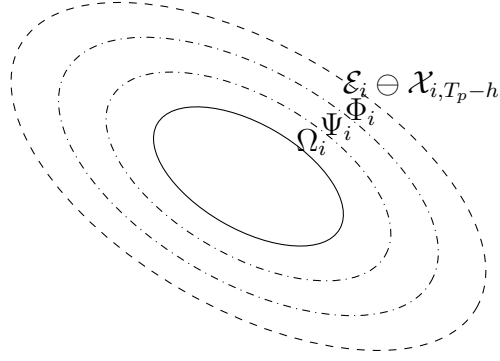

\begin{lemma} \label{lemma:V_i_lower_upper_bounded}
	For every $e_i \in \Psi_i$ there exist functions $\alpha_1, \alpha_2 \in \mathcal{K}_{\infty}$ such that: $$\alpha_1\big(\|e_i\|\big) \leq V_i(e_i) \leq \alpha_2\big(\| e_i \|\big), \forall i \in \mathcal{V}.$$
\end{lemma}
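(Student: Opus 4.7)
The plan is to exploit the fact that $V_i$ is a quadratic form with a symmetric positive definite weight matrix $P_i$, which immediately yields two-sided bounds that are quadratic (hence class $\mathcal{K}_\infty$) in $\|e_i\|$. Concretely, the desired result will be obtained as a direct consequence of the Rayleigh--Ritz inequality for symmetric matrices together with the observation that $r \mapsto c\,r^2$ is in $\mathcal{K}_\infty$ for any $c>0$.

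First I would recall that $V_i(e_i) = e_i^\top P_i e_i$ with $P_i \in \mathbb{R}^{n \times n}$ symmetric and positive definite, as specified in the controller design of Section~\ref{sec:control_design}. Since $P_i \succ 0$, both eigenvalues $\lambda_{\min}(P_i)$ and $\lambda_{\max}(P_i)$ are strictly positive real numbers, and the Rayleigh--Ritz inequality gives, for all $e_i \in \mathbb{R}^n$ (and in particular for all $e_i \in \Psi_i \subseteq \mathcal{E}_i \subseteq \mathbb{R}^n$),
\begin{equation*}
\lambda_{\min}(P_i)\,\|e_i\|^2 \;\leq\; e_i^\top P_i e_i \;\leq\; \lambda_{\max}(P_i)\,\|e_i\|^2.
\end{equation*}

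Next I would exhibit explicit class $\mathcal{K}_\infty$ functions. Define
\begin{equation*}
\alpha_1(r) \;\triangleq\; \lambda_{\min}(P_i)\, r^2, \qquad \alpha_2(r) \;\triangleq\; \lambda_{\max}(P_i)\, r^2,
\end{equation*}
for $r \in \mathbb{R}_{\geq 0}$. Both functions are continuous, satisfy $\alpha_k(0)=0$, are strictly increasing on $[0,\infty)$ because their coefficients are strictly positive, and diverge to $+\infty$ as $r \to \infty$; hence $\alpha_1,\alpha_2 \in \mathcal{K}_\infty$ by Definition~\ref{def:k_class}. Substituting $r = \|e_i\|$ into the Rayleigh--Ritz bound yields $\alpha_1(\|e_i\|) \leq V_i(e_i) \leq \alpha_2(\|e_i\|)$ for every $e_i \in \Psi_i$, as claimed.

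There is no real obstacle here: the statement is essentially the standard quadratic-Lyapunov-function sandwich bound, and the only thing to verify carefully is that the two candidate functions lie in $\mathcal{K}_\infty$, which is immediate from positive definiteness of $P_i$. If one wanted $\alpha_1,\alpha_2$ to be independent of the agent index, one could take $\alpha_1(r) = (\min_{i\in\mathcal{V}} \lambda_{\min}(P_i))\,r^2$ and $\alpha_2(r) = (\max_{i\in\mathcal{V}} \lambda_{\max}(P_i))\,r^2$, which are still in $\mathcal{K}_\infty$ since $\mathcal{V}$ is finite and each $P_i$ is positive definite.
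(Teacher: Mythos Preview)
Your proof is correct and follows essentially the same approach as the paper: both apply the Rayleigh--Ritz inequality to the quadratic form $V_i(e_i)=e_i^\top P_i e_i$ and take $\alpha_1(r)=\lambda_{\min}(P_i)\,r^2$, $\alpha_2(r)=\lambda_{\max}(P_i)\,r^2$ as the class $\mathcal{K}_\infty$ bounds. Your additional remark on making $\alpha_1,\alpha_2$ uniform over $i\in\mathcal{V}$ is a harmless refinement not present in the paper.
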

\begin{proof}
	The proof can be found in Appendix \ref{app:proof_lemma_V_i_lower_boudned}.
\end{proof}
\begin{lemma} \label{lemma:V_Lipschitz_e_0}
	The terminal penalty functions $V_i$ are Lipschitz continuous in $\Psi_i$, thus it holds that:
	$$\big|V_i(e) - V_i(e')\big| \leq L_{V_i} \|e - e'\|, \forall e, e' \in \Psi_i,$$
	where $L_{V_i} = 2 \sigma_{\max}(P_i)  \sup \limits_{e \in \Psi_i} \|e\|$.
\end{lemma}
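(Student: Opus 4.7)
The plan is to exploit the quadratic structure of $V_i(e_i) = e_i^\top P_i e_i$ together with the symmetry of $P_i$, reducing the claim to a standard application of the Cauchy--Schwarz inequality. The key algebraic observation is that for any $e,e' \in \mathbb{R}^n$, adding and subtracting $e^\top P_i e'$ and using $P_i = P_i^\top$ yields
\begin{equation*}
V_i(e) - V_i(e') = e^\top P_i e - (e')^\top P_i e' = (e+e')^\top P_i (e-e').
\end{equation*}
This identity is the heart of the argument; everything else is bookkeeping.

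From this identity I would take absolute values and apply Cauchy--Schwarz together with the submultiplicative property of the induced matrix norm to obtain
\begin{equation*}
|V_i(e) - V_i(e')| \le \|e+e'\|\,\|P_i\|\,\|e-e'\| \le \bigl(\|e\|+\|e'\|\bigr) \|P_i\|\, \|e-e'\|.
\end{equation*}
Since $P_i$ is symmetric and positive definite, its induced $2$-norm coincides with its maximum singular value, i.e.\ $\|P_i\| = \sigma_{\max}(P_i)$.

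To complete the bound, I would restrict attention to $e,e' \in \Psi_i$ and bound both $\|e\|$ and $\|e'\|$ by $\sup_{e \in \Psi_i}\|e\|$. This supremum is finite because $\Psi_i = \{e_i \in \mathcal{E}_i : V_i(e_i) \le \varepsilon_{\Psi_i}\}$ is a sublevel set of a positive definite quadratic form and is therefore bounded (in fact compact). Substituting then gives
\begin{equation*}
|V_i(e) - V_i(e')| \le 2\,\sigma_{\max}(P_i)\sup_{e \in \Psi_i}\|e\|\cdot \|e-e'\| = L_{V_i}\|e-e'\|,
\end{equation*}
which is the desired inequality. There is no substantive obstacle here: the only thing requiring a moment of care is verifying that the supremum $\sup_{e \in \Psi_i}\|e\|$ is indeed finite (so that $L_{V_i}$ is a well-defined constant), which follows immediately from positive definiteness of $P_i$ and the definition of $\Psi_i$.
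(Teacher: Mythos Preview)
Your proof is correct and follows essentially the same route as the paper, which omits the proof and refers back to the analogous argument for Lemma~\ref{lemma:F_Lipschitz}: there the quadratic is split as $e^\top P_i(e-e') + (e')^\top P_i(e-e')$ and each term bounded by $\sigma_{\max}(P_i)\|e\|\|e-e'\|$ before taking the supremum over $\Psi_i$. Your factorization $(e+e')^\top P_i(e-e')$ is simply the same two terms combined, so the arguments are identical up to presentation.
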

\begin{proof}
	The proof is similar to the proof of Lemma \ref{lemma:F_Lipschitz} and it is omitted.
\end{proof}

\noindent We can now give the definition of an \textit{admissible input} for the FHOCP \eqref{mpc_position_based_cost_2}-\eqref{eq:mpc_terminal_set}.
\begin{definition}  \label{definition:admissible_input_with_disturbance}
	A control input $u_i : [t_k, t_k + T_p] \to \mathbb{R}^m$ for a state
	$e_i(t_k)$ is called \textit{admissible} for the FHOCP \eqref{mpc_position_based_cost_2}-\eqref{eq:mpc_terminal_set} if the following hold:
	\begin{enumerate}
		\item $u_i(\cdot)$ is piecewise continuous;
		\item $u_i(s) \in \mathcal{U}_i,\ \forall s \in [t_k, t_k + T_p]$;
		\item $e_i\big(t_k + s;\ u_i(\cdot), e_i(t_k)\big) \in \mathcal{E}_i \ominus \mathcal{X}_{i,s},\ \forall s \in [0, T_p]$ and
		\item $e_i\big(t_k + T_p;\ u_i(\cdot), e_i(t_k)\big) \in \Omega_i$.
	\end{enumerate}
\end{definition}

Under these considerations, we can now state the theorem that relates to the guaranteeing of the stability of the compound system of agents
$i \in \mathcal{V}$, when each of them is assigned a desired position.

\begin{theorem} \label{theorem}
	\label{theorem:with_disturbances}
	Suppose that for every $i \in \mathcal{V}$:
	\begin{enumerate}
		\item Assumptions \ref{ass:g_i_g_R_Lipschitz}-\ref{ass:psi_omega} hold;
		\item A solution to FHOCP \eqref{mpc_position_based_cost_2}-\eqref{eq:mpc_terminal_set} is feasible at time $t=0$ with feasible initial conditions, as defined in Definition \ref{def:set_feasible_initial_conditions};
		\item The upper bound $\widetilde{w}_i$ of the disturbance $w_i$ satisfies the following: $$\widetilde{w}_i \leq \dfrac{\varepsilon_{\Psi_i} - \varepsilon_{\Omega_i}}{\dfrac{L_{V_i}}{L_{g_i}} (e^{L_{g_i}h} - 1) e^{L_{g_i} (T_p - h)}}.$$
	\end{enumerate}
	Then, the closed loop trajectories of the system \eqref{eq:error_system_perturbed}, under the control input \eqref{eq:position_based_optimal_u_2} which is the outcome of the FHOCP \eqref{mpc_position_based_cost_2}-\eqref{eq:mpc_terminal_set}, converge to the set $\Omega_i$, as $t \to \infty$ and are ultimately bounded there, for every $i \in \mathcal{V}$.
\end{theorem}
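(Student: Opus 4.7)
The plan is to establish the theorem by the classical two-step argument for robust NMPC with constraint tightening: (i) show recursive feasibility of the FHOCP, and (ii) show that the optimal cost acts as a Lyapunov-like function whose descent, weighed against the disturbance-induced residual controlled by condition (3) of the theorem, forces the closed-loop trajectories into $\Omega_i$ and keeps them there. Because the FHOCPs of different agents are coupled only through the state-constraint sets $\mathcal{E}_i$ (via $\mathcal{R}_i(t_k)$ and $\mathcal{N}_i$), and the predictions of neighbors/sensed agents are available by Assumption \ref{ass:access_to_predicted_info_n}, the argument can be carried out per agent, provided the per-agent candidate trajectory respects those sets.

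For recursive feasibility at $t_{k+1} = t_k + h$, I would use the standard concatenation: take $\tilde{u}_i(s) = \overline{u}_i^\star(s;\,e_i(t_k))$ on $[t_{k+1}, t_k + T_p]$, and $\tilde{u}_i(s) = \kappa_i\big(\overline{e}_i(s)\big)$ on $[t_k + T_p, t_{k+1} + T_p]$. By Lemma \ref{lemma:diff_state_from_same_conditions} the actual $e_i(t_{k+1})$ deviates from $\overline{e}_i^\star(t_{k+1})$ by at most $\tfrac{\widetilde{w}_i}{L_{g_i}}(e^{L_{g_i}h}-1)$. Propagating this under the nominal dynamics \eqref{eq:diff_mpc} via Gronwall and Property \ref{property 1}, the new nominal prediction at $s \in [t_{k+1}, t_k+T_p]$ lies within $\mathcal{X}_{i,s-t_{k+1}} \ominus \mathcal{X}_{i,s-t_k}$ of the old prediction, so invoking Property \ref{property:set_property} on the old containment $\overline{e}_i^\star(s) \in \mathcal{E}_i \ominus \mathcal{X}_{i,s-t_k}$ yields $\overline{e}_i(s) \in \mathcal{E}_i \ominus \mathcal{X}_{i,s-t_{k+1}}$. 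For the terminal extension, the old predicted terminal state lies in $\Omega_i$, and applying $\kappa_i$ keeps the trajectory inside $\Psi_i$ (Assumptions \ref{ass:psi} and \ref{ass:psi_psi}) and lands it in $\Omega_i$ by time $t_{k+1}+T_p$ (Assumption \ref{ass:psi_omega}), provided the disturbance-induced $V_i$-increase does not escape $\Psi_i$. Using Lemma \ref{lemma:V_i_lower_upper_bounded} and Lemma \ref{lemma:V_Lipschitz_e_0} together with Lemma \ref{lemma:diff_state_from_same_conditions}, this last condition translates exactly into the bound on $\widetilde{w}_i$ in hypothesis (3).

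For cost descent, let $J_i^\star(t_k)$ be the optimal value at $t_k$ and bound $J_i^\star(t_{k+1}) \le J_i(e_i(t_{k+1}), \tilde{u}_i(\cdot))$ by optimality. I would decompose the difference $J_i^\star(t_{k+1}) - J_i^\star(t_k)$ into three pieces: (a) the removed stage cost $-\int_{t_k}^{t_{k+1}} F_i(\overline{e}_i^\star(s), \overline{u}_i^\star(s))\,ds$; (b) a Lipschitz correction on $[t_{k+1}, t_k+T_p]$ arising from the discrepancy between the candidate and the old optimal prediction, which is controlled by Lemma \ref{lemma:F_Lipschitz} combined with Lemma \ref{lemma:diff_state_from_same_conditions} and gives a term $\mathcal{O}(\widetilde{w}_i)$; and (c) the cost of the terminal extension $\int_{t_k+T_p}^{t_{k+1}+T_p} F_i\,ds + V_i(\tilde{e}_i(t_{k+1}+T_p)) - V_i(\overline{e}_i^\star(t_k+T_p))$, which is non-positive by integrating \eqref{eq:phi_psi} from Assumption \ref{ass:psi_psi}. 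The net inequality reads $J_i^\star(t_{k+1}) - J_i^\star(t_k) \le -\int_{t_k}^{t_{k+1}} F_i(\overline{e}_i^\star(s), \overline{u}_i^\star(s))\,ds + \varrho_i(\widetilde{w}_i)$ with $\varrho_i \to 0$ as $\widetilde{w}_i \to 0$.

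Finally, for ultimate boundedness, while $e_i(t_k) \notin \Omega_i$ Lemma \ref{lemma:F_i_bounded_K_class} gives a strictly positive lower bound on the integrated stage cost, so the descent inequality forces $J_i^\star(t_k)$ to decrease by at least a fixed amount each sample, ensuring entry into $\Omega_i$ in finitely many steps; once the predicted trajectory terminates in $\Omega_i$, the tightening bound in (3) guarantees, via the same $V_i$-propagation that established feasibility, that the actual state remains within $\Psi_i$ and is pulled back to $\Omega_i$ at the next step. I expect the main technical obstacle to be the bookkeeping in step (b), where the Lipschitz propagation of the disturbance perturbation through the nominal dynamics must be bounded uniformly over the entire tail while simultaneously respecting the time-varying restricted sets $\mathcal{E}_i \ominus \mathcal{X}_{i,s-t_{k+1}}$; keeping careful track of the exponential factors $e^{L_{g_i}\cdot}$ is what ultimately dictates the exact form of the bound in condition (3).
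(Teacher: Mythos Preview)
Your plan is essentially the paper's proof: recursive feasibility via the concatenated candidate $\widetilde{u}_i$, Gr\"onwall propagation of the mismatch through Lemma~\ref{lemma:diff_state_from_same_conditions}, the set arithmetic of Property~\ref{property:set_property}, and condition~(3) to land the new predicted terminal state in $\Psi_i$; then a cost-decrease inequality with a disturbance residual. Two small corrections are worth flagging. First, your piece~(c) is \emph{not} non-positive as written: integrating \eqref{eq:phi_psi} cancels $V_i$ along the \emph{candidate} trajectory, so what remains is $V_i\big(\tilde e_i(t_k+T_p)\big)-V_i\big(\overline e_i^\star(t_k+T_p)\big)$, which contributes another $L_{V_i}\tfrac{\widetilde w_i}{L_{g_i}}(e^{L_{g_i}h}-1)e^{L_{g_i}(T_p-h)}$ term via Lemma~\ref{lemma:V_Lipschitz_e_0}; this is exactly the quantity that condition~(3) controls, and it must be folded into your $\varrho_i(\widetilde w_i)$. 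Second, the paper closes with the ISS framework (Definition~\ref{def:ISS_Lyapunov} and Theorem~\ref{def:ISS_Lyapunov_admit_theorem}) rather than a ``fixed-amount decrease per step'' argument; your version requires you to explicitly show that the stage-cost integral outside $\Omega_i$ dominates $\varrho_i(\widetilde w_i)$, which is not immediate from Lemma~\ref{lemma:F_i_bounded_K_class} alone since the integrand involves the \emph{predicted} trajectory over $[t_k,t_{k+1}]$, not the actual state.
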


\begin{proof}
	The proof of the theorem consists of two parts: firstly, recursive feasibility is established, that is, initial
	feasibility is shown to imply subsequent feasibility; secondly, and based
	on the first part, it is shown that the error state $e_i(t)$ reaches
	the terminal set $\Omega_i$ and it remains there for all times. The feasibility analysis and the convergence analysis can be found in Appendix \ref{app:feasibility_analysis} and Appendix \ref{app:convergence_analysis}, respectively.
\end{proof}

\begin{figure}[t!]
	\centering
	\includegraphics[scale = 0.50]{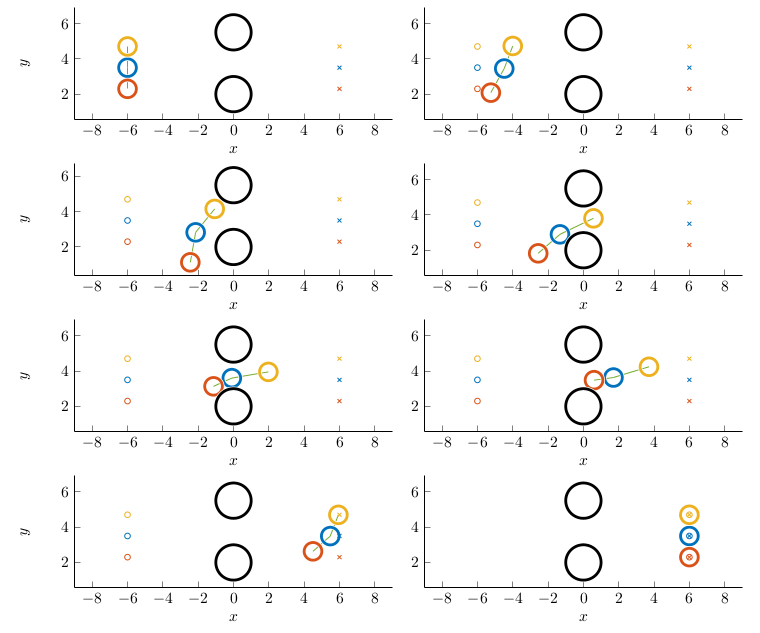}
	\caption{The trajectories of the three agents in the $x-y$ plane. Agent 1 is with
		blue, agent 2 with red and agent 3 with yellow. A faint green line connects
		agents deemed neighbors. The obstacles are depicted with black circles. The indicator ``O" denotes the configurations. The indicator ``X" marks desired configurations.}
	\label{fig:d_OFF_res_trajectory_3_2}
\end{figure}

\begin{remark}
	Due to the existence of disturbances, the error
	of each agent cannot be made to become arbitrarily close to zero, and
	therefore $\lim\limits_{t \to \infty} \|e_i(t)\|$ cannot converge to zero.
	However, if the conditions of Theorem 2 hold, then this error can be bounded
	above by the quantity $\sqrt{\varepsilon_{\Omega_i} / \lambda_{\max}(P_i)}$
	(since the trajectory of the error is trapped in the terminal set, this means
	that $V(e_i) = e_i^{\top} P_i e_i \leq \varepsilon_{\Omega_i}$ for every $e_i \in \Omega_i$).
\end{remark}

\section{Simulation Results} \label{sec:simulation_results}
For a simulation scenario, consider $N = 3$ unicycle agents with dynamics: $$\dot{x}_i(t) = 
\begin{bmatrix}
\dot{x}_i(t) \\
\dot{y}_i(t) \\ 
\dot{\theta}_i(t) \\
\end{bmatrix}
=
\begin{bmatrix}
v_i(t) \cos \theta_i(t) \\
v_i(t) \sin \theta_i(t) \\
\omega_i(t) \\
\end{bmatrix} 
+
w_i(x_i, t) I_{3 \times 1},$$
where: $i \in \mathcal{V} = \{1,2,3\}$, $x_i = \left[x_i, y_i, \theta_i \right]^\top$, $f_i(z_i, u_i) = \left[ v_i \cos \theta_i, v_i \sin \theta_i, \omega_i \right]^\top$, $u_i = \left[v_i, \omega_i \right]^\top$, $w_i = \widetilde{w}_i \sin(2 t)$, with $\widetilde{w}_i = 0.1$. We set $\widetilde{u}_i = 15$, $r_i = 0.5$, $d_i = 4r_i = 2.0$ and $\varepsilon = 0.01$. The neighboring sets are set to $\mathcal{N}_1 = \{2,3\}$, $\mathcal{N}_2 = \mathcal{N}_3 = \{1\}$. The agents' initial positions are $x_1$ $=$ $[-6, 3.5, 0]^{\top}$, $x_2$ $=$ $[-6, 2.3, 0]^{\top}$ and $x_3$ $=$ $[-6, 4.7, 0]^{\top}$. Their desired configurations in steady-state are $x_{1, \text{des}}$ $=$ $[6, 3.5, 0]^{\top}$, $x_{2, \text{des}}$ $=$ $[6, 2.3, 0]^{\top}$ and $x_{3, \text{des}}$ $=$ $[6, 4.7, 0]^{\top}$. In the workspace, we place $2$ obstacles with centers at points $[0, 2.0]^{\top}$ and $[0, 5.5]^{\top}$, respectively. The obstacles' radii are $r_{\scriptscriptstyle O_\ell} = 1.0$, $\ell \in \mathcal{L} = \{1,2\}$. The matrices $Q_i$, $R_i$, $P_i$ are set to $Q_i = 0.7 (I_3 + 0.5\dagger_3)$, $R_i = 0.005 I_2$ and $P_i = 0.5 (I_3 + 0.5\dagger_3)$, where $\dagger_N$ is a $N \times N$ matrix whose elements are uniformly randomly chosen between the values $0.0$ and $1.0$. The sampling time is $h = 0.1$ sec, the time-horizon is $T_p = 0.6$ sec, and the total execution time given is $10$ sec. Furthermore, we set: $L_{f_i} = 10.7354$, $L_{V_i} = 0.0471$, $\varepsilon_{\Psi_i} = 0.0654$ and $\varepsilon_{\Omega_i} = 0.0035$ for all $i \in \mathcal{V}$.
\begin{figure}[t!]
	\centering
	\includegraphics[scale = 0.40]{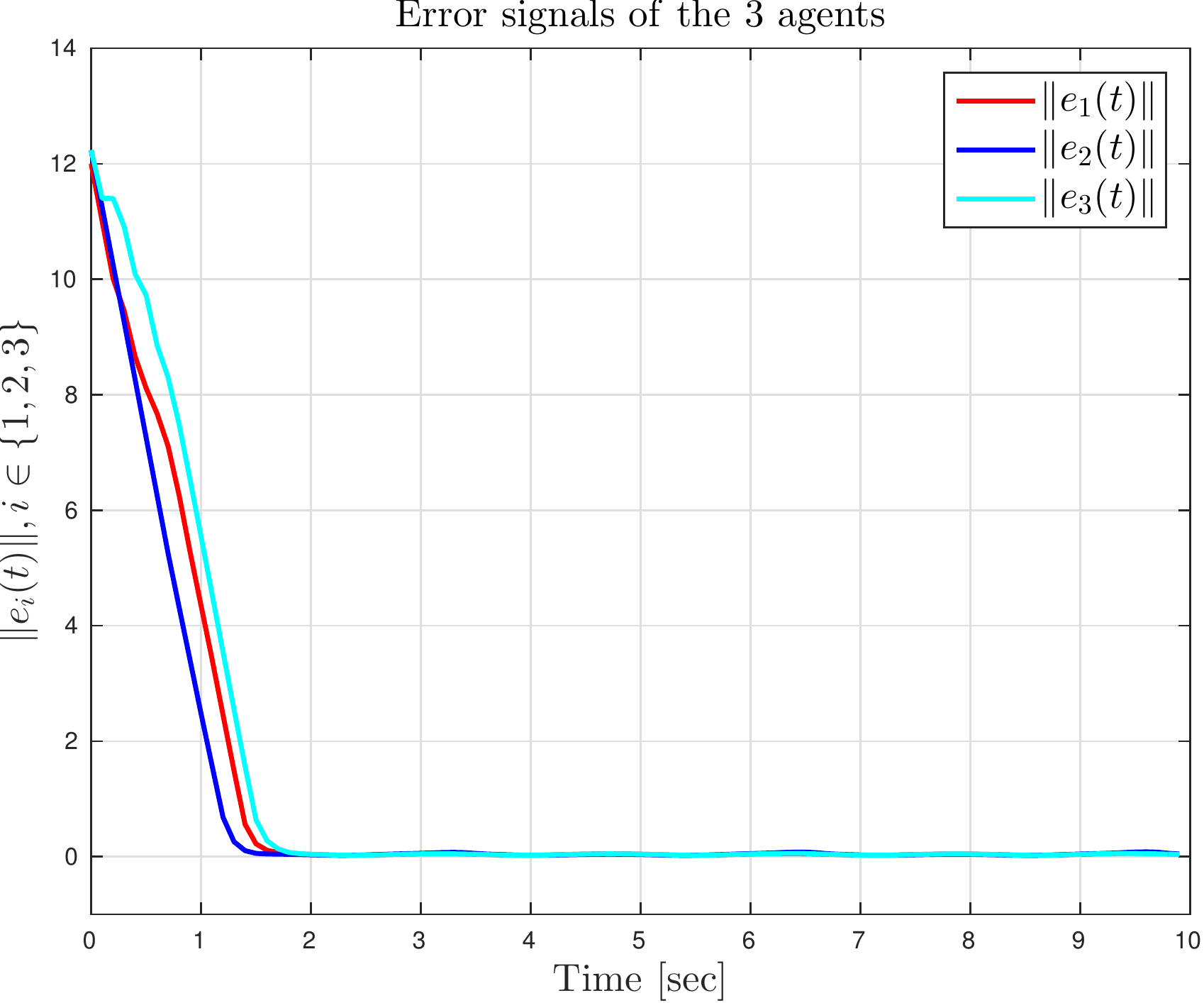}
	\caption{The evolution of the error signals of the three agents.}
	\label{fig:d_ON_res_3_2_errors_agent_1}
\end{figure}
\begin{figure}[t!]
	\centering
	\includegraphics[scale = 0.40]{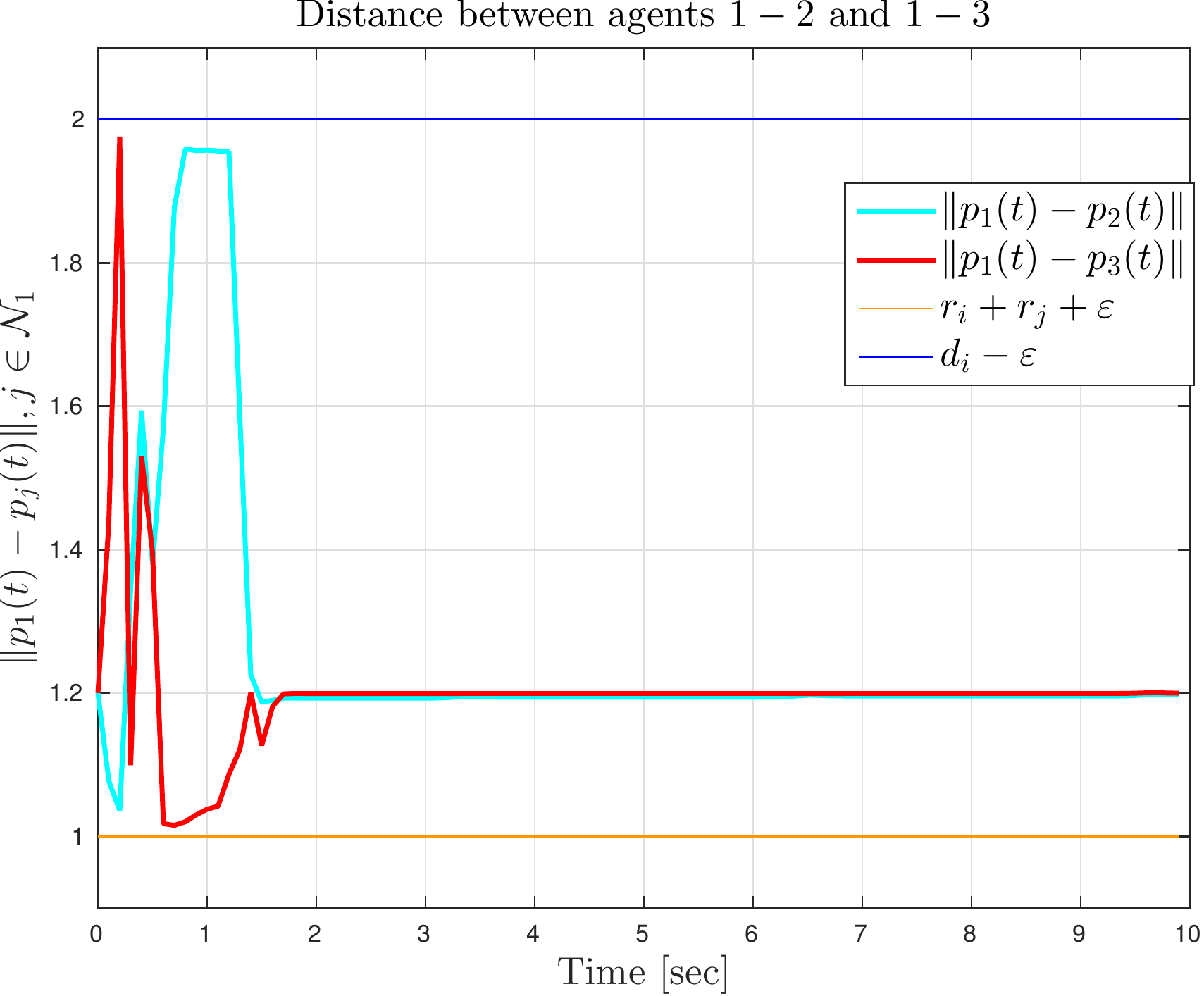}
	\caption{The distance between the agents $1-2$ and $1-3$ over time. The maximum and the minimum allowed
		distances are $d_i-\varepsilon = 1.99$ and $r_i+r_j +\varepsilon = 1.01$, respectively for every $i \in \mathcal{V}$, $j \in \mathcal{N}_i$.}
	\label{fig:d_ON_res_3_2_distance_agents_13}
\end{figure}
\begin{figure}[t!]
	\centering
	\includegraphics[scale = 0.40]{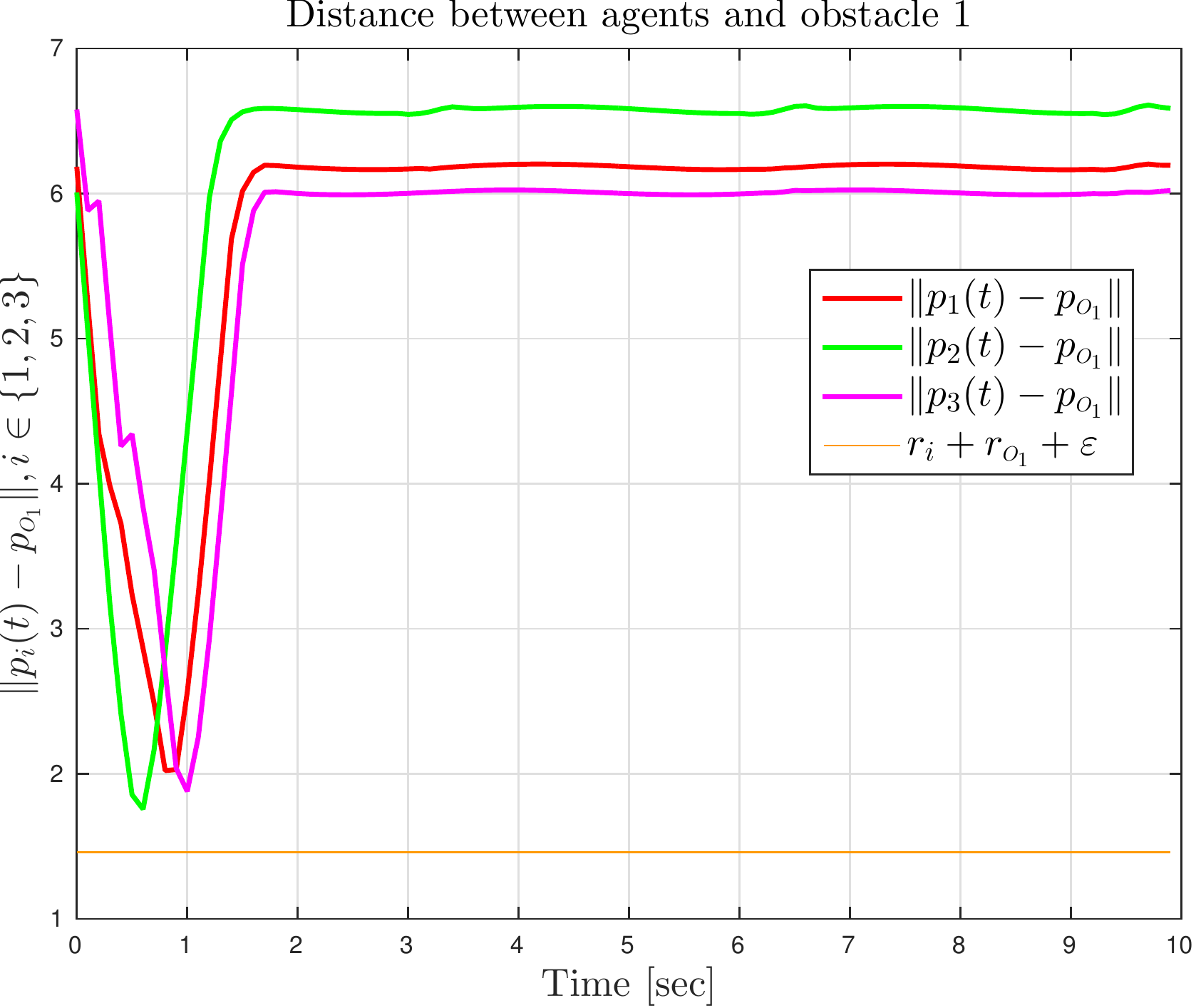}
	\caption{The distance between the agents and obstacle $1$ over time. The
		minimum allowed distance is $r_i + r_{\scriptscriptstyle O_1} + \varepsilon = 1.51$.}
	\label{fig:d_ON_res_3_2_distance_obstacle_1_agents}
\end{figure}

The frames of the evolution of the trajectories of the three agents in the $x-y$
plane are depicted in Figure \ref{fig:d_OFF_res_trajectory_3_2};  Figure \ref{fig:d_ON_res_3_2_errors_agent_1} depicts the evolution of the
error states of agents;  Figure \ref{fig:d_ON_res_3_2_distance_agents_13} shows the evolution of the distances between the neighboring agents; Figure \ref{fig:d_ON_res_3_2_distance_obstacle_1_agents} and Figure \ref{fig:d_ON_res_3_2_distance_obstacle_2_agents} depict the distance between the agents and the obstacle $1$ and $2$, respectively. Finally,  Figure \ref{fig:d_ON_res_3_2_inputs_agent_2} shows the input signals directing the agents through time. It can be observed that all agents reach their desired goal by satisfying all the constraints imposed by Problem \ref{problem}. The simulation was performed in MATLAB R2015a Environment utilizing the NMPC optimization routine provided in \cite{grune2016nonlinear}. The simulation takes $1340 \sec$ on a desktop with 8 cores, 3.60GHz CPU and 16GB of RAM.

\begin{figure}[t!]
	\centering
	\includegraphics[scale = 0.40]{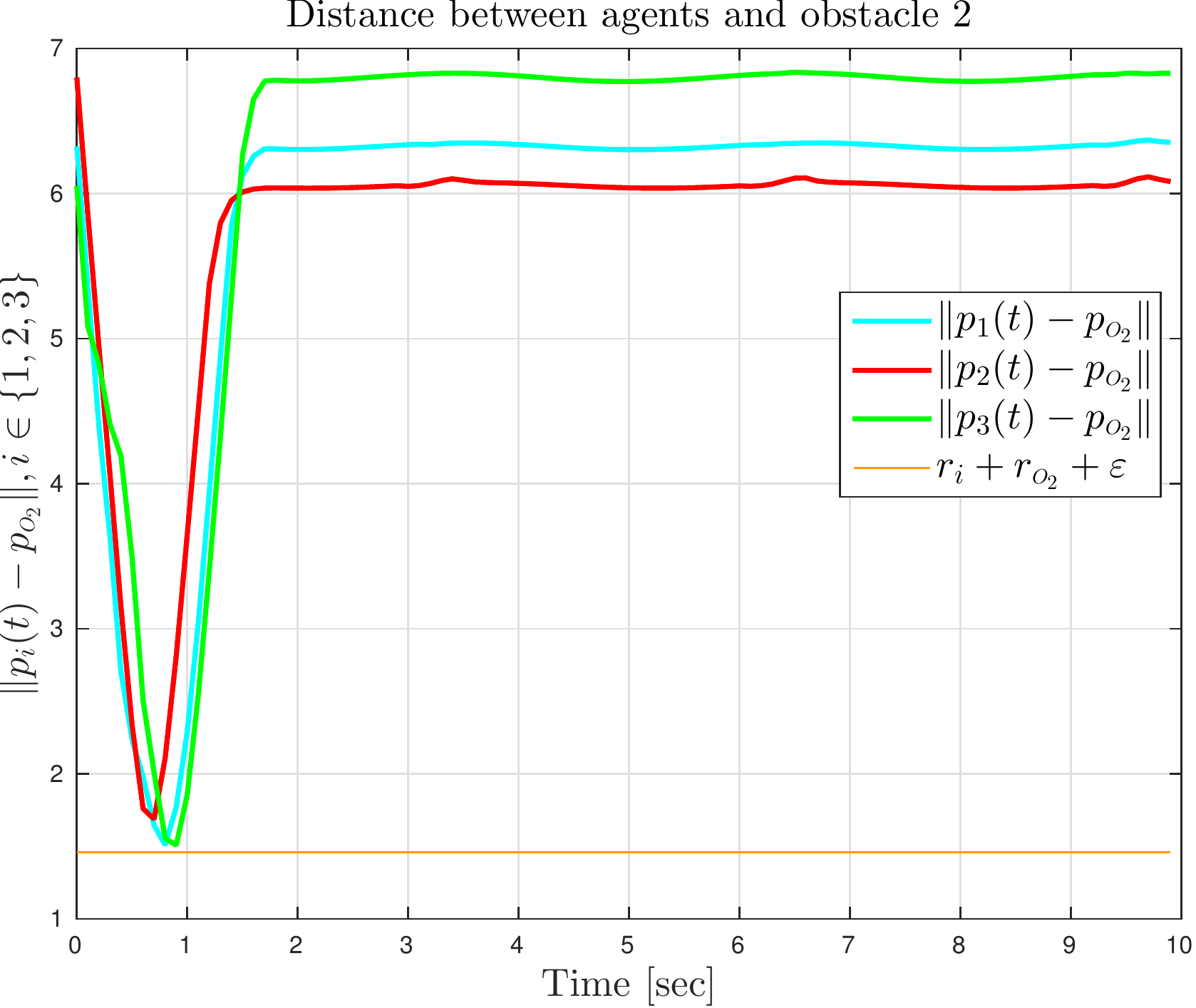}
	\caption{The distance between the agents and obstacle $2$ over time. The
		minimum allowed distance is $r_i + r_{\scriptscriptstyle O_2} + \varepsilon = 1.51$.}
	\label{fig:d_ON_res_3_2_distance_obstacle_2_agents}
\end{figure}

\begin{figure}[t!]
	\centering
	\includegraphics[scale = 0.40]{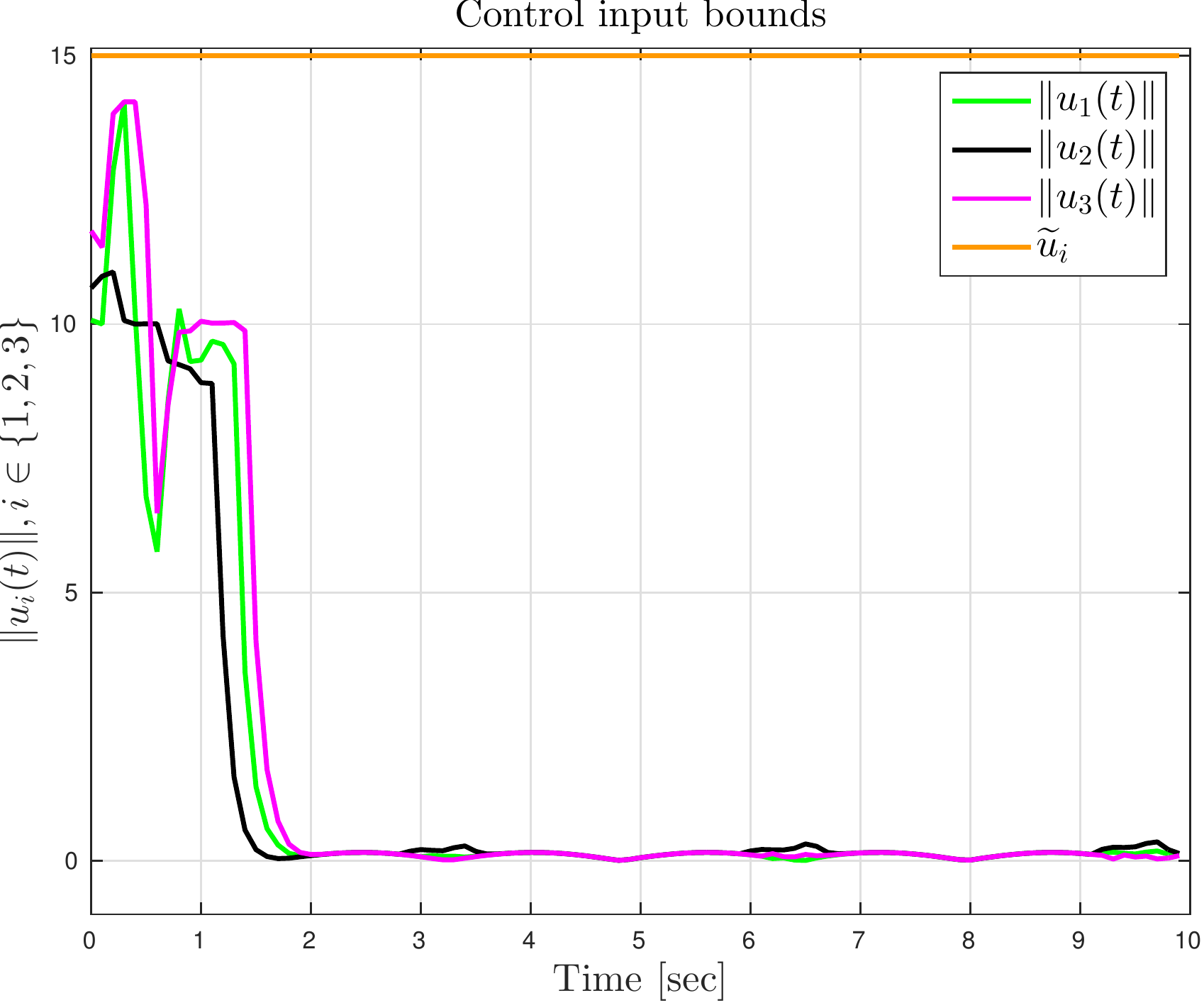}
	\caption{The norms of control inputs signals with $\widetilde{u}_i = 15$.}
	\label{fig:d_ON_res_3_2_inputs_agent_2}
\end{figure}

\section{Conclusions} \label{sec:conclusions}

This paper addresses the problem of stabilizing a multi-agent system under constraints relating to the maintenance of connectivity between
initially connected agents, the aversion of collision among agents and between agents and stationary obstacles within their working environment. Control input saturation as well as disturbances and model uncertainties are also taken into consideration. The proposed control law is a class of Decentralized Nonlinear Model Predictive Controllers. Simulation results verify the controller efficiency of the proposed framework. Future efforts will be devoted to reduce the communication burden between the agents by introducing event-triggered communication control laws. 

\appendices

\section{Proof of Property \ref{property:set_property}} \label{app:proof_set_property}

\noindent Consider the vectors $u$, $v$, $w$, $x \in \mathbb{R}^n$. According to Definition \ref{def:p_difference}, we have that:
\begin{align*}
\mathcal{S}_1 \ominus \mathcal{S}_2 & = \{u \in \mathbb{R}^n: u+v \in \mathcal{S}_1, \forall \ v \in \mathcal{S}_2 \}, \\
\mathcal{S}_2 \ominus \mathcal{S}_3 & = \{w \in \mathbb{R}^n: w+x \in \mathcal{S}_2, \forall \ x \in \mathcal{S}_3 \}.
\end{align*}
Then, by adding the aforementioned sets according to Definition \ref{def:p_difference} we get:
\begin{align}
(\mathcal{S}_1 \ominus \mathcal{S}_2) \oplus (\mathcal{S}_2 \ominus \mathcal{S}_3) & \notag \\
&\hspace{-18mm}= \{u + w \in \mathbb{R}^n : u + v \in \mathcal{S}_1 \ \text{and} \  w + x \in \mathcal{S}_2, \forall \ v \in \mathcal{S}_2, \forall \ x \in \mathcal{S}_3 \} \notag \\
&\hspace{-18mm}= \{u + w \in \mathbb{R}^n : u + v + w + x \in (\mathcal{S}_1 \oplus \mathcal{S}_2), \forall \ v + x \in (\mathcal{S}_2 \oplus \mathcal{S}_3) \}. \label{eq:u_v_w_x}
\end{align}
By setting $s_1 = u + w \in \mathbb{R}^n$, $s_2 = v + x \in \mathbb{R}^n$ and employing Definition \ref{def:p_difference}, \eqref{eq:u_v_w_x} becomes:
\begin{align*}
(\mathcal{S}_1 \ominus \mathcal{S}_2) \oplus (\mathcal{S}_2 \ominus \mathcal{S}_3) & = \{s_1 \in \mathbb{R}^n : s_1 + s_2 \in (\mathcal{S}_1 \oplus \mathcal{S}_2), \forall \ s_2 \in (\mathcal{S}_2 \oplus \mathcal{S}_3) \} \notag \\
&= (\mathcal{S}_1 \oplus \mathcal{S}_2) \ominus (\mathcal{S}_2 \oplus \mathcal{S}_3),
\end{align*}
which concludes the proof. \qed

\section{Proof of Property \ref{property 1}} \label{app:proof_property_2}

\noindent By setting $z \triangleq e + z_{\text{des}} \in \mathbb{R}^n$, $z' \triangleq e' + z_{\text{des}} \in \mathbb{R}^n$ in \eqref{eq:system} we get:
\begin{align*}
\|f_i(e + z_{\text{des}},u)-f_i(e' + z_{\text{des}},u)\| & \le L_{f_i} \|e  + z_{\text{des}}-e' - z_{\text{des}}\|. \notag
\end{align*}
By using \eqref{eq:function_g_i}, the latter becomes:
\begin{align*}
\|g_i(e, u)-g_i(e', u)\| & \le L_{g_i} \|e-e'\|, \notag
\end{align*}
where $L_{g_i} = L_{f_i}$, which leads to the conclusion of the proof. \qed

\section{Proof of Lemma \ref{lemma:F_i_bounded_K_class}} \label{app:proof_lemma_1}

\noindent By invoking the fact that:
\begin{align} \label{eq:rayleigh_inequality}
\lambda_{\min}(P) \|y\|^2 \le y^\top P y \le \lambda_{\max}(P) \|y\|^2, \forall y \in \mathbb{R}^n, P \in \mathbb{R}^{n \times n}, P = P^\top > 0,
\end{align}
we have:
\begin{align*}
e_i^\top Q_i e_i + u_i^\top R_i u_i & \le \lambda_{\max}(_i) \|e_i\|^2 + \lambda_{\max}(R_i) \|u_i\|^2 \notag \\
& = \max \{\lambda_{\max}(Q_i), \lambda_{\max}(R_i) \} \|\eta_i \|^2, \notag
\end{align*}
and:
\begin{align*}
e_i^\top Q_i e_i + u_i^\top R_i u_i & \ge \lambda_{\min}(Q_i) \|e_i\|^2 + \lambda_{\min}(R_i) \|u_i\|^2 \notag \\
& = \min \{\lambda_{\min}(Q_i), \lambda_{\min}(R_i) \} \|\eta_i \|^2, \notag
\end{align*}
where $\eta_i = \left[ e_i^\top, u_i^\top\right]^\top$ and $i \in \mathcal{V}$. Thus, we get:
\begin{align*}
\min \{\lambda_{\min}(Q_i), \lambda_{\min}(R_i) \} \|\eta_i \|^2 & \le e_i^\top Q_i e_i + u_i^\top R_i u_i \le \notag \\
&\hspace{25mm} \max \{\lambda_{\max}(Q_i), \lambda_{\max}(R_i) \} \|\eta_i \|^2.
\end{align*}
By defining the $\mathcal{K}_{\infty}$ functions $\alpha_1$, $\alpha_2 : \mathbb{R}_{\ge 0}  \to \mathbb{R}_{\ge 0}$:
\begin{align*}
\alpha_1(y) \triangleq \min \{\lambda_{\min}(Q_i), \lambda_{\min}(R_i) \} \|y \|^2, \alpha_2(y) \triangleq \max \{\lambda_{\max}(Q_i), \lambda_{\max}(R_i) \} \|y \|^2,
\end{align*}
we get:
\begin{align*}
\alpha_1\big(\|\eta_i\|\big) \leq F_i\big(e_i, u_i\big) \leq \alpha_2\big(\|\eta_i \|\big),
\end{align*}
which leads to the conclusion of the proof. \qed

\section{Proof of Lemma \ref{lemma:F_Lipschitz}} \label{app:proof_of_F_lipsitz}

\noindent For every $e_i, e_i' \in \mathcal{E}_i$, and $u_i \in \mathcal{U}_i$ it holds that:
\begin{align}
\big|F_i(e_i, u_i) - F_i(e'_i, u_i)\big|
&= \big|e_i^{\top} Q_i e_i + u_i^{\top} R_i u_i
-(e_i')^{\top} Q_i e'_i - u_i^{\top} R_i u_i \big| \notag \\
&= \big|e_i^{\top} Q_i (e_i -e'_i) - (e')^{\top} Q_i (e_i-e'_i)\big| \notag \\
&\leq \big|e_i^{\top} Q_i (e_i -e'_i)\big| + \big| (e'_i)^{\top} Q_i (e_i-e'_i)\big|. \label{eq:inequality_lemma_2}
\end{align}
By employing the property that:
\begin{align*}
|e_i^{\top} Q_i e_i'| \leq \|e_i \|  \| Q_i e_i'\| \le  \| Q_i \| \|e_i \|  \| e_i'\| \le \sigma_{\max}(Q_i) \|e_i\| \|e'_i\|,
\end{align*}
\eqref{eq:inequality_lemma_2} is written as:
\begin{align*}
\big|F_i(e_i, u_i) - F_i(e_i', u_i)\big| &\leq \sigma_{\max}(Q_i) \|e_i\| \|e_i - e_i'\| +
\sigma_{\max}(Q_i) \|e_i'\| \|e_i - e_i'\| \\
&\le \sigma_{\max}(Q_i) \sup \limits_{e_i, e_i' \in \mathcal{E}_i}
\left\{ \|e_i\| + \|e_i'\| \right\} \| e_i - e_i'\| \\
&\le \left[ 2 \sigma_{\max}(Q_i) \sup \limits_{e_i \in \mathcal{E}_i} \|e_i\|  \right]  \|e_i - e_i'\| \notag \\
&= L_{F_i}  \|e_i - e_i'\|.  \hspace{85mm} \qed 
\end{align*} 

\section{Proof of Lemma \ref{lemma:diff_state_from_same_conditions}} \label{app:proof_lemma_diff_state}

\noindent By employing Property \ref{remark:predicted_actual_equations_with_disturbance} and substituting $\tau \equiv t_k$ and $s \equiv t_k + s$ in \eqref{eq:remark_4_eq_1}, \eqref{eq:remark_4_eq_2} yields:
\begin{align*}
e_i\big(t_k + s;\ \overline{u}_i\big(\cdot;\ e_i(t_k)\big), e_i(t_k)\big) &= \notag \\
&\hspace{-10mm} e_i(t_k)
+ \int_{t_k}^{t_k + s} g_i\big(e_i(s';\ e_i(t_k)), \overline{u}_i(s')\big) ds'
+ \int_{t_k}^{t_k + s} w_i(\cdot, s')ds', \\
\overline{e}_i\big(t_k + s;\ \overline{u}_i\big(\cdot;\ e_i(t_k)\big), e_i(t_k)\big) &=
e_i(t_k) + \int_{t_k}^{t_k + s} g_i\big(\overline{e}_i(s';\ e_i(t_k)), \overline{u}_i(s')\big) ds',
\end{align*}
respectively. Subtracting the latter from the former and taking norms on both sides yields:
\begin{align*}
& \bigg\|  e_i\big(t_k + s;\ \overline{u}_i\big(\cdot;\ e_i(t_k)\big), e_i(t_k)\big) -
\overline{e}_i\big(t_k + s;\ \overline{u}_i\big(\cdot;\ e_i(t_k)\big), e_i(t_k)\big) \bigg\| \\
&=\bigg\| \int_{t_k}^{t_k + s} g_i\big(e_i(s';\ e_i(t_k)), \overline{u}_i(s')\big) ds'
- \int_{t_k}^{t_k + s} g_i\big(\overline{e}_i(s';\ e_i(t_k)), \overline{u}_i(s')\big) ds' \notag \\
&\hspace{95mm} + \int_{t_k}^{t_k + s} w_i(\cdot, s')ds' \bigg\| \\
&\leq L_{g_i} \int_{t_k}^{t_k + s} \bigg\| e_i\big(s;\ \overline{u}_i\big(\cdot;\ e_i(t)\big), e_i(t)\big) -
\overline{e}_i\big(s;\ \overline{u}_i\big(\cdot;\ e_i(t)\big), e_i(t)\big) \bigg\| ds + s \widetilde{w}_i,
\end{align*}
since, according to Property \ref{property 1}, $g_i$ is Lipschitz continuous in $\mathcal{E}_i$ with Lipschitz constant
$L_{g_i}$. Then, we get:
\begin{align*}
& \bigg\| e_i\big(t_k+s;\ \overline{u}_i\big(\cdot;\ e_i(t_k)\big), e_i(t_k)\big) -
\overline{e}_i\big(t_k+s;\ \overline{u}_i\big(\cdot;\ e_i(t_k)\big), e_i(t_k)\big) \bigg\| \\
& \hspace{-2mm} \leq s \widetilde{w}_i
+ L_{g_i} \int_{0}^{s} \bigg\| e_i\big(t_k + s';\ \overline{u}_i\big(\cdot;\ e_i(t_k)\big), e_i(t_k)\big) -
\overline{e}_i\big(t_k + s';\ \overline{u}_i\big(\cdot;\ e_i(t_k)\big), e_i(t_k)\big) \bigg\| ds'. \\
\end{align*}
By applying the Gr\"{o}nwall-Bellman inequality (see \cite[Appendix A]{khalil_nonlinear_systems}) we get:
\begin{align*}
\bigg\| e_i\big(t_k + s;\ \overline{u}_i\big(\cdot;\ e_i(t_k)\big), e_i(t_k)\big) &-
\overline{e}_i\big(t_k + s;\ \overline{u}_i\big(\cdot;\ e_i(t_k)\big), e_i(t_k)\big) \bigg\| \\
&\leq s \widetilde{w}_i +  L_{g_i} \int_{0}^{s} s' \widetilde{w}_i e^{L_{g_i}(s - s')} ds' \\
&= \dfrac{\widetilde{w}_i}{L_{g_i}} (e^{L_{g_i}s} - 1). \hspace{70mm} \qed
\end{align*}

\section{Proof of Property \ref{property:restricted_constraint_set}} \label{app:proof_of_property_restricted_const}

\noindent Let us define the function $\zeta_i : \mathbb{R}_{\geq 0} \to \mathbb{R}^{n}$ as:
$\zeta_i(s) \triangleq e_i(s) - \overline{e}_i(s;\ u_i(s;\ e_i(t_k)), e_i(t_k))$,
for $s \in [t_k, t_k + T_p]$. According to Lemma \ref{lemma:diff_state_from_same_conditions} we have that:
\begin{align}
\|\zeta_i(s)\| = \|e_i(s) - \overline{e}_i\big(s;\ u_i(s;\ e_i(t)), e_i(t)\big)\|
&\leq \dfrac{\widetilde{w}_i}{L_{g_i}} (e^{L_{g_i} (s-t)} - 1), s \in [t_k, t_k + T_p], \notag
\end{align}
which means that $\zeta_i(s) \in \mathcal{X}_{i,s-t}$. Now we have that
$\overline{e}_i\big( s;\ u_i(\cdot,\ e_i(t_k)), e_i(t_k) \big) \in \mathcal{E}_i \ominus \mathcal{X}_{i,s-t_k}$.
Then, it holds that:
\begin{align*}
\zeta_i(s) + \overline{e}_i\big(s;\ u_i(s;\ e_i(t_k)), e_i(t_k)\big)
&\in \big(\mathcal{E}_i \ominus \mathcal{X}_{i,s-t_k}\big) \oplus \mathcal{X}_{i,s-t_k}.
\end{align*}
or
\begin{align*}
e_i(s) &\in \big(\mathcal{E}_i \ominus \mathcal{X}_{i,s-t_k}\big) \oplus \mathcal{X}_{i,s-t_k}.
\end{align*}
Theorem 2.1 (ii) from \cite{kolmanovsky} states that for every $U,V \subseteq \mathbb{R}^n$ it holds that: $\left(U \ominus V \right) \oplus V \subseteq U$. By invoking the latter result we get:
\begin{align*}
e_i(s) &\in \big(\mathcal{E}_i \ominus \mathcal{X}_{i,s-t_k}\big) \oplus \mathcal{X}_{i,s-t_k} \subseteq \mathcal{E}_i  \Rightarrow e_i(s) \in \mathcal{E}_i, s \in [t_k, t_k+T_p],
\end{align*}
which concludes the proof. \qed

\section{Proof of Lemma \ref{lemma:V_i_lower_upper_bounded}} \label{app:proof_lemma_V_i_lower_boudned}

\noindent By invoking \eqref{eq:rayleigh_inequality} we get:
\begin{align*}
\lambda_{\min}(P_i) \|e_i\|^2 \le e_i^\top P_i e_i \le \lambda_{\max}(P_i) \|e_i\|^2, \forall e_i \in \Psi_i, i \in \mathcal{V}.
\end{align*}
By defining the $\mathcal{K}_{\infty}$ functions $\alpha_1$, $\alpha_2 : \mathbb{R}_{\ge 0}  \to \mathbb{R}_{\ge 0}$:
\begin{align*}
\alpha_1(y) \triangleq \lambda_{\min}(P_i) \|y \|^2, \alpha_2(y) \triangleq \lambda_{\max}(P_i) \|y \|^2,
\end{align*}
we get:
\begin{align*}
\alpha_1\big(\|e_i\|\big) \leq V_i(e_i) \leq \alpha_2\big(\| e_i \|\big), \forall e_i \in \Psi_i, i \in \mathcal{V}.
\end{align*}
which leads to the conclusion of the proof. \qed

\section{Feasibility Analysis} \label{app:feasibility_analysis}

In this section we will show that there can be constructed an admissible but not necessarily optimal control input according to Definition \ref{definition:admissible_input_with_disturbance}.

Consider a sampling instant $t_k$ for which a solution $\overline{u}_i^{\star}\big(\cdot;\ e_i(t_k)\big)$ to Problem $1$ exists.
Suppose now a time instant $t_{k+1}$ such that $t_k < t_{k+1} < t_k + T_p$, and consider that the
optimal control signal calculated at $t_k$ is comprised by the following two
portions:

\begin{equation}
\overline{u}_i^{\star}\big(\cdot;\ e_i(t_k)\big) = \left\{
\begin{array}{ll}
\overline{u}_i^{\star}\big(\tau_1;\ e_i(t_k)\big), & \tau_1 \in [t_k, t_{k+1}], \\
\overline{u}_i^{\star}\big(\tau_2;\ e_i(t_k)\big), & \tau_2 \in [t_{k+1}, t_k + T_p].
\end{array}
\right.
\label{eq:optimal_input_portions_with_disturbances}
\end{equation}

Both portions are admissible since the calculated optimal control input is
admissible, and hence they both conform to the input constraints.
As for the resulting predicted states, they satisfy the state constraints, and,
crucially:
\begin{align}
\overline{e}_i\big(t_k + T_p;\ \overline{u}_i^{\star}(\cdot), e_i(t_k)\big) \in \Omega_i.
\label{eq:predicted_t_k_T_p_from_t_k_in_omega}
\end{align}
Furthermore, according to condition $(3)$ of Theorem \ref{theorem:with_disturbances}, there exists an admissible (and certainly not guaranteed optimal feedback control) input $\kappa_i \in \mathcal{U}_i$ that renders $\Psi_i$ (and consequently $\Omega_i$) invariant over $[t_k + T_p, t_{k+1} + T_p]$.

Given the above facts, we can construct an admissible input $\widetilde{u}_i(\cdot)$  for time $t_{k+1}$ by sewing together the second portion of \eqref{eq:optimal_input_portions_with_disturbances} and the admissible input $\kappa_i(\cdot)$:

\begin{equation}
\widetilde{u}_i(\tau) = \left\{
\begin{array}{ll}
\overline{u}_i^{\star}\big(\tau;\ e_i(t_k)\big), & \tau \in [t_{k+1}, t_k + T_p], \\
\kappa_i \big(\overline{e}_i\big(\tau;\ \overline{u}_i^{\star}(\cdot), e_i(t_{k+1})\big)\big), & \tau \in (t_k + T_p, t_{k+1} + T_p].
\end{array}
\right.
\label{eq:optimal_input_t_plus_one_with_disturbances}
\end{equation}

Applied at time $t_{k+1}$, $\widetilde{u}_i(\tau)$
is an admissible control input with respect to the input constraints as a composition of admissible control inputs, for all $\tau \in [t_{k+1}, t_{k+1} + T_p]$. What remains to prove is the following two statements:\\

\noindent \textbf{Statement 1 :} $e_i\big(t_{k+1} + s;\ \overline{u}_i^{\star}(\cdot), e_i(t_{k+1})\big) \in \mathcal{E}_i$, $\forall s \in [0,T_p]$.

\noindent \textbf{Statement 2 :} $\overline{e}_i\big(t_{k+1} + T_p;\ \widetilde{u}_i(\cdot), e_i(t_{k+1}) \big) \in \Omega_i$. \\

\noindent \textbf{Proof of Statement 1 :} Initially we have that: $\overline{e}_i\big(t_{k+1} + s;\ \widetilde{u}_i(\cdot), e_i(t_{k+1})\big) \in \mathcal{E}_i \ominus \mathcal{X}_{s}$,
for all $s \in [0, T_p]$. By applying Lemma \ref{lemma:diff_state_from_same_conditions} for $t=t_{k+1} + s$ and $\tau=t_k$ we get
\begin{align*}
\bigg\|
e_i\big(t_{k+1}+s;\  \overline{u}_i^{\star}(\cdot), e_i(t_k)\big)
-\overline{e}_i\big(t_{k+1} + s;\ \overline{u}_i^{\star}(\cdot), e_i(t_k)\big)
\bigg\|
\leq \dfrac{\widetilde{w}_i}{L_{g_i}}\big(e^{L_{g_i} (h+s)}-1\big),
\end{align*}
or equivalently:
\begin{align*}
e_i\big(t_{k+1}+s;\  \overline{u}_i^{\star}(\cdot), e_i(t_k)\big)
-\overline{e}_i\big(t_{k+1} + s;\ \overline{u}_i^{\star}(\cdot), e_i(t_k)\big)
\in \mathcal{X}_{i, h+s}.
\end{align*}

By applying a reasoning identical to the proof of Lemma \ref{lemma:diff_state_from_same_conditions} for $t=t_{k+1}$ (in the model equation) and $t = t_k$ (in the real model equation), and $\tau = s$ we get:
\begin{align*}
\bigg\|
e_i\big(t_{k+1}+s;\  \overline{u}_i^{\star}(\cdot), e_i(t_k)\big)
-\overline{e}_i\big(t_{k+1} + s;\ \overline{u}_i^{\star}(\cdot), e_i(t_{k+1})\big)
\bigg\|
\leq \dfrac{\widetilde{w}_i}{L_{g_i}}\big(e^{L_{g_i} s}-1\big),
\end{align*}
which translates to:
\begin{align*}
e_i\big(t_{k+1}+s;\  \overline{u}_i^{\star}(\cdot), e_i(t_k)\big)
-\overline{e}_i\big(t_{k+1} + s;\ \overline{u}_i^{\star}(\cdot), e_i(t_{k+1})\big)
\in \mathcal{X}_{i,s}.
\end{align*}

Furthermore, we know that the solution to the optimization problem is
feasible at time $t_k$, which means that: $\overline{e}_i\big(t_{k+1} + s;\ \overline{u}_i^{\star}(\cdot), e_i(t_k)\big) \in \mathcal{E}_i \ominus \mathcal{X}_{i,h+s}$. Let us for sake of readability set:
\begin{align*}
e_{i, 0} &= e_i\big(t_{k+1} + s;\ \overline{u}_i^{\star}(\cdot), e_i(t_k)\big), \\
\overline{e}_{i, 0} &= \overline{e}_i\big(t_{k+1} + s;\ \overline{u}_i^{\star}(\cdot), e_i(t_k)\big), \\
\overline{e}_{i, 1} &= \overline{e}_i\big(t_{k+1} + s;\ \overline{u}_i^{\star}(\cdot), e_i(t_{k+1})\big),
\end{align*}
and translate the above system of inclusion relations:
\begin{align*}
e_{i,0} - \overline{e}_{i,0} \in \mathcal{X}_{i, h+s}, e_{i,0} - \overline{e}_{i,1} \in \mathcal{X}_{i,s}, \overline{e}_{i,0} \in \mathcal{E}_i \ominus \mathcal{X}_{i,h+s}.
\end{align*}
First we will focus on the first two relations, and we will derive a result
that will combine with the third statement so as to prove that the predicted
state will be feasible from $t_{k+1}$ to $t_{k+1} + T_p$. Subtracting the
second from the first yields
\begin{align*}
\overline{e}_{i,1} - \overline{e}_{i,0} \in \mathcal{X}_{i, h+s} \ominus \mathcal{X}_{i,s}.
\end{align*}
Now we use the third relation $\overline{e}_{i,0} \in \mathcal{E}_i \ominus \mathcal{X}_{i,h+s}$, along with: $\overline{e}_{i,1}- \overline{e}_{i,0} \in \mathcal{X}_{i, h+s} \ominus \mathcal{X}_{i,s}$. Adding the latter to the former yields:
\begin{align*}
\overline{e}_{i,1} \in \big(\mathcal{E}_i \ominus \mathcal{X}_{i,h+s}\big) \oplus \big(\mathcal{X}_{i, h+s} \ominus \mathcal{X}_{i,s}\big).
\end{align*}
By using \eqref{eq:a_minus_b_plus_c} of Property \ref{property:set_property} we get:
\begin{align*}
\overline{e}_{i,1} \in \big(\mathcal{E}_i \oplus \mathcal{X}_{i,h+s}\big) \ominus \big(\mathcal{X}_{i, h+s} \oplus \mathcal{X}_{i,s}\big).
\end{align*}
Using implication \footnote{$A = B_1 \oplus B_2 \Rightarrow A \ominus B = (A \ominus B_1) \ominus B_2$}
(v) of Theorem 2.1 from \cite{kolmanovsky} yields:
\begin{align*}
\overline{e}_{i,1} \in \bigg(\big(\mathcal{E}_i \oplus \mathcal{X}_{i,h+s}\big) \ominus \mathcal{X}_{i, h+s}\bigg) \ominus \mathcal{X}_{i,s}.
\end{align*}
Using implication \footnote{$(A \oplus B) \ominus B \subseteq A$}
(3.1.11) from \cite{schneider_2013} yields
\begin{align*}
\overline{e}_{i,1} \in \mathcal{E}_i \ominus \mathcal{X}_{i,s},
\end{align*}
or equivalently:
\begin{align}
\overline{e}_i\big(t_{k+1} + s;\ \overline{u}_i^{\star}(\cdot), e_i(t_{k+1})\big) \in \mathcal{E}_i \ominus \mathcal{X}_{i,s},\
\forall s \in [0,T_p].
\label{eq:feasibility_2}
\end{align}

By consulting with Property \ref{property:restricted_constraint_set}, this means that the state of the ``true" system does not violate the constraints $\mathcal{E}_i$
over the horizon $[t_{k+1}, t_{k+1} + T_p]$:
\begin{align}
& \overline{e}_i\big(t_{k+1} + s;\ \overline{u}_i^{\star}(\cdot), e_i(t_{k+1})\big) \in \mathcal{E}_i \ominus \mathcal{X}_{i,s} \notag \\
& \Rightarrow\
e_i\big(t_{k+1} + s;\ \overline{u}_i^{\star}(\cdot), e_i(t_{k+1})\big) \in \mathcal{E}_i,\ \forall s \in [0,T_p].
\label{eq:feasibility_with_disturbance_second_point}
\end{align}

\noindent \textbf{Proof of Statement 3}: To prove this statement we begin with:
\begin{align}
V_i\big(\overline{e}_i\big(t_k + T_p;&\ \overline{u}_i^{\star}(\cdot), e_i(t_{k+1})\big)\big)
- V_i\big(\overline{e}_i\big(t_k + T_p;\ \overline{u}_i^{\star}(\cdot), e_i(t_k)\big) \notag \\
&\leq \bigg|  V_i\big(\overline{e}_i\big(t_k + T_p;\ \overline{u}_i^{\star}(\cdot), e_i(t_{k+1})\big)\big)
- V_i\big(\overline{e}_i\big(t_k + T_p;\ \overline{u}_i^{\star}(\cdot), e_i(t_k)\big) \bigg | \notag \\
&\leq L_{V_i}\bigg \| \overline{e}_i\big(t_k + T_p;\ \overline{u}_i^{\star}(\cdot), e_i(t_{k+1})\big)
- \overline{e}_i\big(t_k + T_p;\ \overline{u}_i^{\star}(\cdot), e_i(t_k) \big)\Big\|. \label{eq:from_DV_to_De}
\end{align}
Consulting with Remark \ref{remark:predicted_actual_equations_with_disturbance} we get that the two terms inside the norm are respectively equal to:
\begin{align*}
\overline{e}_i\big(t_k + T_p;\ \overline{u}_i^{\star}(\cdot), e_i(t_{k+1})\big)
&= e_i(t_{k+1}) + \int_{t_{k+1}}^{t_k + T_p} g_i\big(\overline{e}_i(s;\ e_i(t_{k+1})), \overline{u}_i^{\star}(s) \big)ds,
\end{align*}
and
\begin{align*}
& \overline{e}_i\big(t_k + T_p;\ \overline{u}_i^{\star}(\cdot), e_i(t_k)\big)
= e_i(t_k)     + \int_{t_k}^{t_k + T_p} g_i\big(\overline{e}_i(s;\ e_i(t_k)), \overline{u}_i^{\star}(s) \big)ds \\
&= e_i(t_k)     + \int_{t_k}^{t_{k+1}} g_i\big(\overline{e}_i(s;\ e_i(t_k)), \overline{u}_i^{\star}(s) \big)ds
+ \int_{t_{k+1}}^{t_k + T_p} g_i\big(\overline{e}_i(s;\ e_i(t_k)), \overline{u}_i^{\star}(s) \big)ds \\
&= \overline{e}_i(t_{k+1}) + \int_{t_{k+1}}^{t_k + T_p} g_i\big(\overline{e}_i(s;\ e_i(t_k)), \overline{u}_i^{\star}(s) \big)ds.
\end{align*}
Subtracting the latter from the former and taking norms on both sides we get:
\begin{align*}
& \bigg \|\overline{e}_i\big(t_k + T_p;\ \overline{u}_i^{\star}(\cdot), e_i(t_{k+1})\big)
- \overline{e}_i\big(t_k + T_p;\ \overline{u}_i^{\star}(\cdot), e_i(t_k)\big) \bigg\| \\
&= \bigg \| e_i(t_{k+1}) - \overline{e}_i(t_{k+1}) + \int_{t_{k+1}}^{t_k + T_p} g_i\big(\overline{e}_i(s;\ e_i(t_{k+1})), \overline{u}_i^{\star}(s) \big)ds \\
&\hspace{50mm} - \int_{t_{k+1}}^{t_k + T_p} g_i\big(\overline{e}_i(s;\ e_i(t_k)), \overline{u}_i^{\star}(s) \big)ds \bigg \| \\
&\leq \bigg \| e_i(t_{k+1}) - \overline{e}_i(t_{k+1}) \bigg\| + \bigg \| \int_{t_{k+1}}^{t_k + T_p} g_i\big(\overline{e}_i(s;\ e_i(t_{k+1})), \overline{u}_i^{\star}(s) \big)ds \\
&\hspace{50mm} - \int_{t_{k+1}}^{t_k + T_p} g_i\big(\overline{e}_i(s;\ e_i(t_k)), \overline{u}_i^{\star}(s) \big)ds \bigg \| \\
&\leq \bigg \| e_i(t_{k+1}) - \overline{e}_i(t_{k+1}) \bigg\| \\
&+  L_{g_i} \int_{t_{k+1}}^{t_k + T_p} \bigg\| \overline{e}_i\big(s;\ \overline{u}_i^{\star}(\cdot), e_i(t_{k+1})\big)
- \overline{e}_i\big(s;\ \overline{u}_i^{\star}(\cdot), e_i(t_k)\big) \bigg\| ds \\
&= \bigg \| e_i(t_{k+1}) - \overline{e}_i(t_{k+1}) \bigg\| \\
& +  L_{g_i} \int_{h}^{T_p} \bigg\| \overline{e}_i\big(t_k + s;\ \overline{u}_i^{\star}(\cdot), e_i(t_{k+1})\big)
- \overline{e}_i\big(t_k + s;\ \overline{u}_i^{\star}(\cdot), e_i(t_k)\big) \bigg\| ds.
\end{align*}
By applying the  Gr\"{o}nwall-Bellman inequality we obtain:
\begin{align*}
\bigg \|\overline{e}_i\big(t_k + T_p;\ \overline{u}_i^{\star}(\cdot), e_i(t_{k+1})\big)
- \overline{e}_i\big(t_k + T_p;&\ \overline{u}_i^{\star}(\cdot), e_i(t_k)\big) \bigg\| \\
&\leq \bigg \| e_i(t_{k+1}) - \overline{e}_i(t_{k+1}) \bigg\| e^{L_{g_i} (T_p - h)}.
\end{align*}
By applying Lemma \ref{lemma:diff_state_from_same_conditions} for $t = t_k$ and
$\tau = h$ we have:
\begin{align*}
\bigg \|\overline{e}_i\big(t_k + T_p;\ \overline{u}_i^{\star}(\cdot), e_i(t_{k+1})\big)
- \overline{e}_i\big(t_k + T_p;\ \overline{u}_i^{\star}(\cdot), e_i(t_k)\big) \bigg\|
\leq \dfrac{\widetilde{w}_i}{L_{g_i}} (e^{L_{g_i}h} - 1) e^{L_{g_i} (T_p - h)}.
\end{align*}

Hence \eqref{eq:from_DV_to_De} becomes:
\begin{align}
V_i\big(\overline{e}_i\big(&t_k + T_p;\ \overline{u}_i^{\star}(\cdot), e_i(t_{k+1})\big)\big)
- V_i\big(\overline{e}_i\big(t_k + T_p;\ \overline{u}_i^{\star}(\cdot), e_i(t_k)\big) \notag \\
&\leq L_{V_i}\bigg \| \overline{e}_i\big(t_k + T_p;\ \overline{u}_i^{\star}(\cdot), e_i(t_{k+1})\big)
- \overline{e}_i\big(t_k + T_p;\ \overline{u}_i^{\star}(\cdot), e_i(t_k) \big)\Big\| \notag \\
&= L_{V_i} \dfrac{\widetilde{w}_i}{L_{g_i}} (e^{L_{g_i}h} - 1) e^{L_{g_i} (T_p - h)}.
\label{eq:from_DV_to_eq}
\end{align}

Since the solution to the optimization problem is assumed to be feasible at time $t_k$, all states fulfill their respective constraints, and in particular, from \eqref{eq:predicted_t_k_T_p_from_t_k_in_omega}, the predicted state
$\overline{e}_i\big(t_k + T_p;\ \overline{u}_i^{\star}(\cdot), e_i(t_k)\big) \in \Omega_i$.
This means that
$V_i\big(\overline{e}_i\big(t_k + T_p;\ \overline{u}_i^{\star}(\cdot), e_i(t_k)\big) \leq \varepsilon_{\Omega_i}$.
Hence \eqref{eq:from_DV_to_eq} becomes:
\begin{align*}
V_i\big(\overline{e}_i\big(&t_k + T_p;\ \overline{u}_i^{\star}(\cdot), e_i(t_{k+1})\big)\big) \\
&\leq V_i\big(\overline{e}_i\big(t_k + T_p;\ \overline{u}_i^{\star}(\cdot), e_i(t_k)\big) + L_{V_i} \dfrac{\widetilde{w}_i}{L_{g_i}} (e^{L_{g_i}h} - 1) e^{L_{g_i} (T_p - h)} \\
&\leq \varepsilon_{\Omega_i} + L_{V_i} \dfrac{\widetilde{w}_i}{L_{g_i}} (e^{L_{g_i}h} - 1) e^{L_{g_i} (T_p - h)}.
\end{align*}
From Assumption 4 of Theorem \ref{theorem:with_disturbances}, the upper bound of the disturbance is in turn bounded by:
\begin{align*}
\widetilde{w}_i \leq \dfrac{\varepsilon_{\Psi_i} - \varepsilon_{\Omega_i}}{\dfrac{L_{V_i}}{L_{g_i}} (e^{L_{g_i}h} - 1) e^{L_{g_i} (T_p - h)}}.
\end{align*}
Therefore:
\begin{align*}
V_i\big(\overline{e}_i\big(t_k + T_p;\ \overline{u}_i^{\star}(\cdot), e_i(t_{k+1})\big)\big)
\leq  \varepsilon_{\Omega_i} - \varepsilon_{\Omega_i} + \varepsilon_{\Psi_i} = \varepsilon_{\Psi_i}.
\end{align*}
or, expressing the above in terms of $t_{k+1}$ instead of $t_k$:
\begin{align*}
V_i\big(\overline{e}_i\big(t_{k+1} + T_p-h;\ \overline{u}_i^{\star}(\cdot), e_i(t_{k+1})\big)\big) \leq \varepsilon_{\Psi_i}.
\end{align*}
This means that the state
$\overline{e}_i\big(t_{k+1} + T_p-h;\ \overline{u}_i^{\star}(\cdot), e_i(t_{k+1})\big) \in \Psi_i$.
From Assumption \ref{ass:psi_omega}, and since $\Psi_i \subseteq \Phi_i$, there is an admissible control signal
$\kappa_i \big(\overline{e}_i\big(t_{k+1} + T_p-h;\ \overline{u}_i^{\star}(\cdot), e_i(t_{k+1})\big)\big)$
such that:
\begin{align*}
\overline{e}_i\big(t_{k+1} + T_p;\ \kappa_i(\cdot), \overline{e}_i\big(t_{k+1} + T_p - h;\ \overline{u}_i^{\star}(\cdot), e_i(t_{k+1})\big)\big) \in \Omega_i.
\end{align*}
Therefore, overall, it holds that:
\begin{align}
\overline{e}_i\big(t_{k+1} + T_p;\ \widetilde{u}_i(\cdot), e_i(t_{k+1}) \big) \in \Omega_i.
\label{eq:feasibility_3}
\end{align}

Piecing the admissibility of $\widetilde{u}_i(\cdot)$ from \eqref{eq:optimal_input_t_plus_one_with_disturbances} together with conclusions \eqref{eq:feasibility_2} and \eqref{eq:feasibility_3}, we conclude that the
application of the control input $\widetilde{u}_i(\cdot)$ at time
$t_{k+1}$ results in that the states of the real system fulfill their intended constraints during the entire horizon $[t_{k+1}, t_{k+1} + T_p]$. Therefore,
overall, the (sub-optimal) control input $\widetilde{u}_i(\cdot)$ is
admissible at time $t_{k+1}$ according to Definition
\ref{definition:admissible_input_with_disturbance}, which means
that feasibility of a solution to the optimization problem at time $t_k$ implies
feasibility at time $t_{k+1} > t_k$. Thus, since at time $t=0$ a solution is
assumed to be feasible, a solution to the optimal control problem is feasible
for all $t \geq 0$. \qed

\section{Convergence Analysis} \label{app:convergence_analysis}

The second part of the proof involves demonstrating that the state $e_i$ is ultimately bounded in $\Omega_i$. We will show that the
\textit{optimal} cost $J_i^{\star}\big(e_i(t)\big)$ is an ISS Lyapunov function for the closed loop system \eqref{eq:error_system_perturbed}, under the control input \eqref{eq:position_based_optimal_u_2}, according to Definition \ref{def:ISS_Lyapunov}, with: $$J_i^{\star}\big(e_i(t)\big) \triangleq J_i \Big(e_i(t), \overline{u}_i^{\star}\big(\cdot;\ e_i(t)\big)\Big).$$ for notational convenience, let us as  before define the following:
terms:

\begin{itemize}
	\item $u_{0,i}(\tau) \triangleq \overline{u}_i^{\star}\big(\tau;\ e_i(t_k)\big)$
	as the \textit{optimal} input that results from the solution to Problem $1$ based on the measurement of state
	$e_i(t_k)$, applied at time $\tau \geq t_k$;
	\item $e_{0,i}(\tau) \triangleq \overline{e}_i\big(\tau;\ \overline{u}_i^{\star}\big(\cdot;\ e_i(t_k)\big), e_i(t_k)\big)$
	as the \textit{predicted} state at time $\tau \geq t_k$, that is, the predicted state that results from the application of the above input
	$\overline{u}_i^{\star}\big(\cdot;\ e_i(t_k)\big)$ to the
	state $e_i(t_k)$, at time $\tau$;
	\item $u_{1,i}(\tau) \triangleq \widetilde{u}_i(\tau)$
	as the \textit{admissible} input at $\tau \geq t_{k+1}$
	(see \eqref{eq:optimal_input_t_plus_one_with_disturbances});
	\item $e_{1,i}(\tau) \triangleq \overline{e}_i\big(\tau;\ \widetilde{u}_i(\cdot), e_i(t_{k+1})\big)$
	as the \textit{predicted} state at time $\tau \geq t_{k+1}$, that is,
	the predicted state that results from the application of the above input
	$\widetilde{u}_i(\cdot)$ to the state $e_i\big(t_{k+1};$ $\overline{u}_i^{\star}\big(\cdot;\ e_i(t_k)\big), e_i(t_k)\big)$, at time $\tau$.
\end{itemize}

Before beginning to prove convergence, it is worth noting that while the cost
$$J_i \Big(e_i(t), \overline{u}_i^{\star}\big(\cdot;\ e_i(t)\big)\Big),$$
is optimal (in the sense that it is based on the optimal input, which provides
its minimum realization), a cost that is based on a plainly admissible
(and thus, without loss of generality, sub-optimal) input
$u_i \not= \overline{u}_i^{\star}$ will result in a configuration where
\begin{equation*}
J_i \Big(e_i(t), u_i\big(\cdot;\ e_i(t)\big)\Big)
\geq J_i \Big(e_i(t), \overline{u}_i^{\star}\big(\cdot;\ e_i(t)\big)\Big).
\end{equation*}

Let us now begin our investigation on the sign of the difference between the cost
that results from the application of the feasible input $u_{1,i}$,
which we shall denote by $\overline{J}_i\big(e_i(t_{k+1})\big)$,
and the optimal cost $J_i^{\star}\big(e_i(t_k)\big)$, while recalling that:
$J_i \big(e_i(t), \overline{u}_i (\cdot)\big)$ $=$
$\int_{t}^{t + T_p} F_i \big(\overline{e}_i(s), \overline{u}_i (s)\big) ds$ $+$
$V_i \big(\overline{e}_i (t + T_p)\big)$:
\begin{align}
\overline{J}_i\big(e_i(t_{k+1})\big) - J_i^{\star}\big(e_i(t_k)\big) =\
& V_i \big(e_{1,i} (t_{k+1} + T_p)\big) + \int_{t_{k+1}}^{t_{k+1} + T_p} F_i \big(e_{1,i}(s), u_{1,i} (s)\big) ds \notag \\
-&V_i \big(e_{0,i} (t_k + T_p)\big) - \int_{t_k}^{t_k + T_p} F_i \big(e_{0,i}(s), u_{0,i} (s)\big) ds.
\end{align}
Considering that $t_k < t_{k+1} < t_k + T_p < t_{k+1} + T_p$, we break down the
two integrals above in between these integrals:
\begin{align}
&\overline{J}_i\big(e_i(t_{k+1})\big) - J_i^{\star}\big(e_i(t_k)\big) = \notag \\
V_i \big(e_{1,i} (t_{k+1} + T_p)\big)
&\hspace{0mm}+ \int_{t_{k+1}}^{t_k + T_p} F_i \big(e_{1,i}(s), u_{1,i} (s)\big) d s
+ \int_{t_k + T_p}^{t_{k+1} + T_p} F_i \big(e_{1,i}(s), u_{1,i} (s)\big) d s \notag \\
-V_i \big(e_{0,i} (t_k + T_p)\big)
&\hspace{0mm}- \int_{t_k}^{t_{k+1}} F_i \big(e_{0,i}(s), u_{0,i} (s)\big) d s
- \int_{t_{k+1}}^{t_k + T_p} F_i \big(e_{0,i}(s), u_{0,i} (s)\big) d s.
\label{eq:convergence_4_integrals_2}
\end{align}

\noindent Let us first focus on the difference between the two intervals in \eqref{eq:convergence_4_integrals_2} over $[t_{k+1}, t_{k+1} + T_p]$:
\begin{align}
& \int_{t_{k+1}}^{t_k  + T_p} F_i  \big(e_{1,i}(s), u_{1,i} (s)\big) d s
- \int_{t_{k+1}}^{t_k + T_p} F_i \big(e_{0,i}(s), u_{0,i} (s)\big) d s \notag \\
&= \int_{t_k+h}^{t_k + T_p} F_i \big(e_{1,i}(s), u_{1,i} (s)\big) d s
- \int_{t_k+h}^{t_k + T_p} F_i \big(e_{0,i}(s), u_{0,i} (s)\big) d s \notag \\
& \leq \bigg| \int_{t_k+h}^{t_k + T_p} F_i \big(e_{1,i}(s), u_{1,i} (s)\big) d s
- \int_{t_k+h}^{t_k + T_p} F_i \big(e_{0,i}(s), u_{0,i} (s)\big) d s \bigg| \notag \\
& = \bigg| \int_{t_k+h}^{t_k + T_p} \bigg( F_i \big(e_{1,i}(s), u_{1,i} (s)\big)
-  F_i \big(e_{0,i}(s), u_{0,i} (s)\big) \bigg) d s \bigg| \notag \\
& = \int_{t_k+h}^{t_k + T_p} \bigg| F_i \big(e_{1,i}(s), u_{1,i} (s)\big)
-  F_i \big(e_{0,i}(s), u_{0,i} (s)\big) \bigg| d s \notag \\
& \leq L_{F_i}\int_{t_k+h}^{t_k + T_p} \bigg\| \overline{e}_i\big(s;\ u_{1,i} (\cdot), e_i(t_k + h) \big)
-  \overline{e}_i\big(s;\ u_{0,i} (\cdot), e_i(t_k)\big) \bigg\| d s \notag
\end{align}
\begin{align}
& = L_{F_i}\int_{h}^{T_p} \bigg\| \overline{e}_i\big(t_k + s;\ \overline{u}_i^{\star}(\cdot), e_i(t_k + h) \big)
-  \overline{e}_i\big(t_k + s;\ \overline{u}_i^{\star}(\cdot), e_i(t_k)\big) \bigg\| d s.
\label{eq:integrals_over_same_u_LV}
\end{align}
Consulting with Remark \ref{remark:predicted_actual_equations_with_disturbance} for the two different initial conditions we get:
\begin{align*}
\overline{e}_i\big(t_k + s;\ \overline{u}_i^{\star}(\cdot), e_i(t_k +h)\big)
&= e_i(t_k +h) + \int_{t_k +h}^{t_k + s} g_i\big(\overline{e}_i(\tau;\ e_i(t_k + h)), \overline{u}_i^{\star}(\tau)\big) d\tau,
\end{align*}
and
\begin{align}
& \overline{e}_i\big(t_k + s;\ \overline{u}_i^{\star}(\cdot), e_i(t_k)\big) = e_i(t_k) + \int_{t_k}^{t_k + s} g_i\big(\overline{e}_i(\tau;\ e_i(t_k)), \overline{u}_i^{\star}(\tau)\big) d\tau \notag \\
&= e_i(t_k) + \int_{t_k}^{t_k + h} g_i\big(\overline{e}_i(\tau;\ e_i(t_k)), \overline{u}_i^{\star}(\tau)\big) d\tau + \int_{t_k + h}^{t_k + s} g_i\big(\overline{e}_i(\tau;\ e_i(t_k)), \overline{u}_i^{\star}(\tau)\big) d\tau. \notag
\end{align}
Subtracting the latter from the former and taking norms on either side yields
\begin{align}
&\bigg \| \overline{e}_i\big(t_k + s;\ \overline{u}_i^{\star}(\cdot), e_i(t_k +h)\big)
-\overline{e}_i\big(t_k + s;\ \overline{u}_i^{\star}(\cdot), e_i(t_k)\big) \bigg \| \notag \\
&= \bigg\| e_i(t_k +h)
- \bigg( e_i(t_k) + \int_{t_k}^{t_k + h} g_i\big(\overline{e}_i(\tau;\ e_i(t_k)), \overline{u}_i^{\star}(\tau)\big) d\tau \bigg) \notag \\
&+ \int_{t_k +h}^{t_k + s} g_i\big(\overline{e}_i(\tau;\ e_i(t_k + h)), \overline{u}_i^{\star}(\tau)\big) d\tau
- \int_{t_k + h}^{t_k + s} g_i\big(\overline{e}_i(\tau;\ e_i(t_k)), \overline{u}_i^{\star}(\tau)\big) d\tau \bigg\| \notag \\
&= \bigg\| e_i(t_k +h) - \overline{e}_i(t_k + h) \notag \\
&+ \int_{t_k +h}^{t_k + s} \bigg( g_i\big(\overline{e}_i(\tau;\ e_i(t_k + h)), \overline{u}_i^{\star}(\tau)\big)
-  g_i\big(\overline{e}_i(\tau;\ e_i(t_k)), \overline{u}_i^{\star}(\tau)\big) \bigg) d\tau \bigg\| \notag \\
&\leq \bigg\| e_i(t_k +h) - \overline{e}_i(t_k + h) \bigg \| \notag \\
&+ \bigg\| \int_{t_k +h}^{t_k + s} \bigg( g_i\big(\overline{e}_i(\tau;\ e_i(t_k + h)), \overline{u}_i^{\star}(\tau)\big)
-  g_i\big(\overline{e}_i(\tau;\ e_i(t_k)), \overline{u}_i^{\star}(\tau)\big) \bigg) d\tau \bigg\| \notag \\
&\leq \bigg\| e_i(t_k +h) - \overline{e}_i(t_k + h) \bigg \| \notag \\
&+ \int_{t_k +h}^{t_k + s} \bigg\| g_i\big(\overline{e}_i(\tau;\ e_i(t_k + h)), \overline{u}_i^{\star}(\tau)\big)
-  g_i\big(\overline{e}_i(\tau;\ e_i(t_k)), \overline{u}_i^{\star}(\tau)\big) \bigg\| d\tau \notag \\
&\leq \bigg\| e_i(t_k +h) - \overline{e}_i(t_k + h) \bigg \| \notag \\
&+ L_{g_i} \int_{t_k +h}^{t_k + s} \bigg\| \overline{e}_i\big(\tau;\ \overline{u}_i^{\star}(\cdot), e_i(t_k + h) \big)
-  \overline{e}_i\big(\tau;\ \overline{u}_i^{\star}(\cdot), e_i(t_k)\big) \bigg\| d\tau \notag \\
&= \bigg\| e_i(t_k +h) - \overline{e}_i(t_k + h) \bigg \| \notag \\
&+ L_{g_i} \int_{h}^{s} \bigg\| \overline{e}_i\big(t_k + \tau;\ \overline{u}_i^{\star}(\cdot), e_i(t_k + h) \big)
-  \overline{e}_i\big(t_k + \tau;\ \overline{u}_i^{\star}(\cdot), e_i(t_k)\big) \bigg\| d\tau.
\label{eq:df_interim_es}
\end{align}

\noindent By using Lemma \ref{lemma:diff_state_from_same_conditions} and applying the the Gr\"{o}nwall-Bellman inequality, \eqref{eq:df_interim_es} becomes:
\begin{align*}
\bigg \| \overline{e}_i\big(t_k + s;\ \overline{u}_i^{\star}(\cdot), e_i(t_k +h)\big)
&-\overline{e}_i\big(t_k + s;\ \overline{u}_i^{\star}(\cdot), e_i(t_k)\big) \bigg \| \notag \\
&\leq \bigg\| e_i(t_k +h) - \overline{e}_i(t_k + h) \bigg \| e^{L_{g_i}(s-h)} \notag \\
&\leq \dfrac{\widetilde{w}_i}{L_{g_i}} (e^{L_{g_i}h} - 1) e^{L_{g_i}(s-h)}.
\end{align*}

\noindent Given the above result, \eqref{eq:integrals_over_same_u_LV} becomes:
\begin{align*}
\int_{t_{k+1}}^{t_k + T_p} F_i \big(e_{1,i}(s),& u_{1,i} (s)\big) d s
- \int_{t_{k+1}}^{t_k + T_p} F_i \big(e_{0,i}(s), u_{0,i} (s)\big) d s \\
& \leq L_{F_i}\int_{h}^{T_p} \dfrac{\widetilde{w}_i}{L_{g_i}} (e^{L_{g_i}h} - 1) e^{L_{g_i}(s-h)} d s \\
& = L_{F_i} \dfrac{\widetilde{w}_i}{L_{g_i}^2} (e^{L_{g_i} h} - 1) (e^{L_{g_i}(T_p-h)} - 1).
\end{align*}
Hence, we have that:

\begin{align}
\int_{t_{k+1}}^{t_k + T_p} F_i \big(e_{1,i}(s), u_{1,i} (s)\big) d s
&- \int_{t_{k+1}}^{t_k + T_p} F_i \big(e_{0,i}(s), u_{0,i} (s)\big) d s \notag \\
& \leq L_{F_i} \dfrac{\widetilde{w}_i}{L_{g_i}^2} (e^{L_{g_i} h} - 1) (e^{L_{g_i}(T_p-h)} - 1).
\label{eq:end_result_two_integrals}
\end{align}

With this result established, we turn back to the remaining terms found in \eqref{eq:convergence_4_integrals_2} and, in particular, we focus on
the integral
\begin{align*}
\int_{t_k + T_p}^{t_{k+1} + T_p} F_i \big(e_{1,i}(s), u_{1,i} (s)\big) d s.
\end{align*}

We discern that the range of the above integral has a length \footnote{$(t_{k+1} + T_p) - (t_k + T_p) = t_{k+1} - t_k = h$} equal to the length of the interval where \eqref{eq:phi_psi} of Assumption \ref{ass:psi_psi} holds. Integrating \eqref{eq:phi_psi} over the interval $[t_k + T_p, t_{k+1} + T_p]$, for the controls and states applicable in it we get:
\begin{align*}
& \int_{t_k + T_p}^{t_{k+1} + T_p} \Bigg(\dfrac{\partial V_i}{\partial e_{1,i}} g_i\big(e_{1,i}(s), u_{1,i}(s)\big)
+ F_i\big(e_{1,i}(s), u_{1,i}(s)\big)\Bigg) ds \leq 0 \\
& \Leftrightarrow \int_{t_k + T_p}^{t_{k+1} + T_p} \dfrac{d}{ds} V_i\big(e_{1,i}(s)\big) d s
+ \int_{t_k + T_p}^{t_{k+1} + T_p} F_i\big(e_{1,i}(s), u_{1,i}(s)\big) ds \leq 0 \\
& \Leftrightarrow V_i\big(e_{1,i}(t_{k+1} + T_p)\big) - V_i\big(e_{1,i}(t_k + T_p)\big)
+ \int_{t_k + T_p}^{t_{k+1} + T_p} F_i\big(e_{1,i}(s), u_{1,i}(s)\big) ds \leq 0 \\
& \Leftrightarrow V_i\big(e_{1,i}(t_{k+1} + T_p)\big)
+ \int_{t_k + T_p}^{t_{k+1} + T_p} F_i\big(e_{1,i}(s), u_{1,i}(s)\big) ds \leq V_i\big(e_{1,i}(t_k + T_p)\big).
\end{align*}

The left-hand side expression is the same as the first two terms in the right-hand side of equality \eqref{eq:convergence_4_integrals_2}. We can introduce the third one by subtracting it from both sides:
\begin{align*}
V_i\big(e_{1,i}(t_{k+1} + T_p)\big)
&+ \int_{t_k + T_p}^{t_{k+1} + T_p} F_i\big(e_{1,i}(s), u_{1,i}(s)\big) ds
- V_i\big(e_{0,i}(t_k + T_p)\big)
\end{align*}
\begin{align*}
&\leq V_i\big(e_{1,i}(t_k + T_p)\big)
- V_i\big(e_{0,i}(t_k + T_p)\big) \\
&\leq L_{V_i}\bigg \|\overline{e}_i\big(t_k + T_p;\ \overline{u}_i^{\star}(\cdot), e_i(t_{k+1})\big)
- \overline{e}_i\big(t_k + T_p;\ \overline{u}_i^{\star}(\cdot), e_i(t_k) \big)\Big\| \\
&\leq L_{V_i} \dfrac{\widetilde{w}_i}{L_{g_i}} (e^{L_{g_i}h} - 1) e^{L_{g_i} (T_p - h)} \text{ (from \eqref{eq:from_DV_to_eq})}.
\end{align*}
Hence, we obtain:
\begin{align}
V_i\big(e_{1,i}(t_{k+1} + T_p)\big)
& + \int_{t_k + T_p}^{t_{k+1} + T_p} F_i\big(e_{1,i}(s), u_{1,i}(s)\big) ds
- V_i\big(e_{0,i}(t_k + T_p)\big) \notag \\
&\leq L_{V_i}\dfrac{\widetilde{w}_i}{L_{g_i}} (e^{L_{g_i}h} - 1) e^{L_{g_i} (T_p - h)}.
\label{eq:end_result_diff_V_plus_int}
\end{align}
Adding the inequalities \eqref{eq:end_result_two_integrals} and \eqref{eq:end_result_diff_V_plus_int} it is derived that:
\begin{align*}
&\int_{t_{k+1}}^{t_k + T_p} F_i \big(e_{1,i}(s), u_{1,i} (s)\big) d s
- \int_{t_{k+1}}^{t_k + T_p} F_i \big(e_{0,i}(s), u_{0,i} (s)\big) d s \\
&+ V_i\big(e_{1,i}(t_{k+1} + T_p)\big)
+ \int_{t_k + T_p}^{t_{k+1} + T_p} F_i\big(e_{1,i}(s), u_{1,i}(s)\big) ds
- V_i\big(e_{0,i}(t_k + T_p)\big) \\
&\leq L_{F_i} \dfrac{\widetilde{w}_i}{L_{g_i}^2} (e^{L_{g_i} h} - 1) (e^{L_{g_i}(T_p-h)} - 1)
+ L_{V_i}\dfrac{\widetilde{w}_i}{L_{g_i}} (e^{L_{g_i}h} - 1) e^{L_{g_i} (T_p - h)}.
\end{align*}
and therefore \eqref{eq:convergence_4_integrals_2}, by bringing the integral ranging from $t_k$ to $t_{k+1}$ to the left-hand side, becomes:
\begin{align*}
\overline{J}_i\big(e_i(t_{k+1})\big)
- J_i^{\star}\big(e_i(t_k)\big)
&+ \int_{t_k}^{t_{k+1}} F_i \big(e_{0,i}(s), u_{0,i} (s)\big) d s \\
&\hspace{-20mm} \leq L_{F_i} \dfrac{\widetilde{w}_i}{L_{g_i}^2} (e^{L_{g_i} h} - 1) (e^{L_{g_i}(T_p-h)} - 1)
+ L_{V_i}\dfrac{\widetilde{w}_i}{L_{g_i}} (e^{L_{g_i}h} - 1) e^{L_{g_i} (T_p - h)}.
\end{align*}

\noindent By rearranging terms, the cost difference becomes bounded by:
\begin{align*}
\overline{J}_i\big(e_i(t_{k+1})\big) - J_i^{\star}\big(e_i(t_k)\big) \ \le \ \xi_i \widetilde{w}_i -\int_{t_k}^{t_{k+1}} F_i \big(e_{0,i}(s), u_{0,i} (s)\big) d s,
\end{align*}
where:
\begin{align*}
\xi_i \triangleq \dfrac{1}{L_{g_i}} \bigg(e^{L_{g_i}h} - 1\bigg)
\bigg[\big(L_{V_i} + \dfrac{L_{F_i}}{L_{g_i}}\big) \big(e^{L_{g_i}(T_p-h)}-1\big) + L_{V_i} \bigg] > 0.
\end{align*}
and $\xi_i \widetilde{w}_i$ is the contribution of the bounded additive
disturbance $w_i(t)$ to the nominal cost difference; $F_i$ is a positive-definite function as a sum of a positive-definite
$u_i^\top R_i u_i$ and a positive semi-definite function
$e_i^\top Q_i e_i$. If we denote by
$\rho_i \triangleq \lambda_{\min}(Q_i, R_i) \geq 0$ the minimum eigenvalue
between those of matrices $R_i, Q_i$, this means that:
\begin{align*}
F_i \big(e_{0,i}(s), u_{0,i} (s)\big) \geq \rho_i \|e_{0,i}(s)\|^2.
\end{align*}
By integrating the above between the interval of interest $[t_k, t_{k+1}]$ we get:
\begin{align*}
-\int_{t_k}^{t_{k+1}} F_i \big(e_{0,i}(s), u_{0,i} (s)\big)
&\leq -\rho_i \int_{t_k}^{t_{k+1}} \| \overline{e}_i(s;\ \overline{u}_i^{\star}, e_i(t_k)) \|^2 ds.
\end{align*}
This means that the cost difference is upper-bounded by:
\begin{align*}
\overline{J}_i\big(e_i(t_{k+1})\big) - J_i^{\star}\big(e_i(t_k)\big)
&\leq \xi_i \widetilde{w}_i -\rho_i \int_{t_k}^{t_{k+1}} \| \overline{e}_i(s;\ \overline{u}_i^{\star}(\cdot), e_i(t_k)) \|^2 ds,
\end{align*}
and since the cost $\overline{J}_i\big(e_i(t_{k+1})\big)$ is, in general,
sub-optimal: $J_i^{\star}\big(e_i(t_{k+1})\big) - \overline{J}_i\big(e_i(t_{k+1})\big) \leq 0$:
\begin{align}
J_i^{\star}\big(e_i(t_{k+1})\big) - J_i^{\star}\big(e_i(t_k)\big)
\leq \xi_i \widetilde{w}_i - \rho_i \int_{t_k}^{t_{k+1}} \| \overline{e}_i(s;\ \overline{u}_i^{\star}(\cdot), e_i(t_k)) \|^2 ds.
\label{eq:J_opt_between_consecutive_k_2}
\end{align}

Let $\Xi_i(e_i) \triangleq J_i^{\star}(e_i)$. Then, between
consecutive times $t_k$ and $t_{k+1}$ when the FHOCP is solved, the above
inequality reforms into:
\begin{align}
\Xi_i\big(e_i(t_{k+1})\big) - \Xi_i\big(e_i(t_k)\big)
&\leq \xi_i \widetilde{w}_i
- \rho_i \int_{t_k}^{t_{k+1}} \| \overline{e}_i(s;\ \overline{u}_i^{\star}(\cdot), e_i(t_k)) \|^2 ds \notag \\
&\leq \int_{t_k}^{t_{k+1}} \bigg( \dfrac{\xi_i}{h} \|w_i(s)\|
-  \rho_i \| \overline{e}_i(s;\ \overline{u}_i^{\star}(\cdot), e_i(t_k)) \|^2 \bigg) ds.
\label{eq:check_for_ISS_here_main_branch}
\end{align}

The functions $\sigma\big(\|w_i\|\big) \triangleq \dfrac{\xi_i}{h} \|w_i\|$ and $\alpha_3\big(\|e_i\|\big) \triangleq \rho_i \|e_i\|^2$ are class
$\mathcal{K}$ functions according to Definition \ref{def:k_class}, and therefore, according to Lemma \ref{lemma:V_i_lower_upper_bounded} and Definition \ref{def:ISS_Lyapunov}, $\Xi_i\big(e_i\big)$ is an ISS Lyapunov function in $\mathcal{E}_i$.

Given this fact, Theorem \ref{def:ISS_Lyapunov_admit_theorem}, implies that
the closed-loop system is input-to-state stable in $\mathcal{E}_i$. Inevitably then,
given Assumptions \ref{ass:psi_psi} and \ref{ass:psi_omega}, and condition $(3)$ of Theorem \ref{theorem:with_disturbances}, the closed-loop
trajectories for the error state of agent $i \in \mathcal{V}$ reach
the terminal set $\Omega_i$ for all
$w_i(t)$ with $\|w_i(t)\| \leq \widetilde{w}_i$, at some point $t = t^{\star} \geq 0$. Once inside
$\Omega_i$, the trajectory is trapped there because of the
implications\footnote{For more details, refer to the discussion after the
	declaration of Theorem $7.6$ in \cite{marquez2003nonlinear}.} of
\eqref{eq:check_for_ISS_here_main_branch} and Assumption \ref{ass:psi_omega}.

In turn, this means that the system \eqref{eq:error_system_perturbed} converges
to $x_{i, \text{des}}$ and is trapped in a vicinity of it $-$ smaller than that
in which it would have been trapped (if actively trapped at all)
in the case of unattenuated disturbances $-$, while simultaneously
conforming to all constraints $\mathcal{Z}_{i}$. This conclusion holds
for all $i \in \mathcal{V}$, and hence, the overall system of agents
$\mathcal{V}$ is stable. \qed

\bibliography{references}
\bibliographystyle{ieeetr}
\end{document}